\renewcommand\footnotetextcopyrightpermission[1]{} 
\keywords{E-mail encryption; Decentralization; Key distribution; Privacy}
\title{ClaimChain: Improving the Security and Privacy of\\ In-band Key Distribution for Messaging}
\author{Bogdan Kulynych}
\affiliation{EPFL SPRING Lab}
\email{bogdan.kulynych@epfl.ch}
\author{Wouter Lueks}
\affiliation{EPFL SPRING Lab}
\email{wouter.lueks@epfl.ch}
\author{Marios Isaakidis}
\affiliation{University College London}
\email{m.isaakidis@cs.ucl.ac.uk}
\author{George Danezis}
\affiliation{University College London}
\email{g.danezis@ucl.ac.uk}
\author{Carmela Troncoso}
\affiliation{EPFL SPRING Lab}
\email{carmela.troncoso@epfl.ch}
\begin{abstract}
The social demand for email end-to-end encryption is barely supported by mainstream service providers. Autocrypt is a new commu\-ni\-ty-driven open specification for e-mail encryption that attempts to respond to this demand. In Autocrypt the encryption keys are attached directly to messages, and thus the encryption can be implemented by email clients without any collaboration of the providers. The decentralized nature of this in-band key distribution, however, makes it prone to man-in-the-middle attacks and can leak the social graph of users. 
To address this problem we introduce ClaimChain, a cryptographic construction for privacy-preserving authentication of public keys. Users store claims about their identities and keys, as well as their beliefs about others, in ClaimChains. These chains form authenticated decentralized repositories that enable users to prove the authenticity of both their keys and the keys of their contacts. ClaimChains are encrypted, and therefore protect the stored information, such as keys and contact identities, from prying eyes. At the same time, ClaimChain implements mechanisms to provide strong non-equivocation properties, discouraging malicious actors from distributing conflicting or inauthentic claims. We implemented ClaimChain and we show that it offers reasonable performance, low overhead, and authenticity guarantees.

\end{abstract}
\begin{document}

    \maketitle

    \section{Introduction}
\label{section:intro}

Following the Snowden revelations it became clear that, given the dependence of citizens, governments, and corporations on electronic communications, there is a strong need for highly secure end-to-end encrypted communications. That is, the content of communications must not be accessed by third parties, so as to shield them from mass surveillance systems, domestic or foreign. Yet, so far we have only seen feeble and largely unsuccessful attempts by mainstream service providers such as GMail and Yahoo to support fully encrypted e-mail~\cite{googleabandon,googlevapor,yahoogoogle}. Only a few minor e-mail providers embrace end-to-end encryption.\footnote{\url{https://mailfence.com}}\footnote{\url{https://protonmail.com}}

To fill this void, a recently launched community-driven initiative, Autocrypt, is developing a new open specification for e-mail encryption. The goal is to facilitate key handling by mail user agents so that encryption can be deployed without the need for collaboration of the service providers. 
The Autocrypt approach is supported\footnote{\url{https://github.com/autocrypt/autocrypt/blob/master/doc/install.rst}} by key e-mail clients such as Thunderbird+Enigmail, K-9 mail, and Mailpile, as well as a new messaging application for Android, DeltaChat.

Similarly to in-band PGP~\cite{Zimmerman95}, Autocrypt embeds the encryption keys into the e-mail messages, but uses pre-defined headers instead of attachments. Furthermore, in the spirit of the PGP Web of Trust, these headers also contain cross-references to the keys of other users. The cross-references implicitly endorse the binding between these keys and the corresponding user identities.

This decentralized approach alleviates the privacy problem of centralized certification authorities, such as SKS Keyservers\footnote{\url{https://sks-keyservers.net}} for PGP keys, which can observe users' key look-ups, and thus can infer their communication patterns. However, since no one has a global view of all the bindings in Autocrypt's decentralized approach, malicious users or providers can supply different user-to-key bindings to different recipients, effectively opening the doors to man-in-the-middle attacks.

This attack, whereby Alice can show to Carol and Donald different versions of Bob's key to manipulate encryption in her advantage, is commonly known as \emph{equivocation}.  
A solution to render equivocation detectable and accountable could be to use CONIKS~\cite{MelaraBBFF15}. However, CONIKS' transparency logs are maintained by providers, and thus the scheme is not compatible with the Autocrypt principle of not requiring provider collaboration.

In this paper we present \emph{\Keychain}, a cryptographic construction that alleviates the authenticity and privacy problems of in-band key distribution in the setting of Autocrypt.\footnote{\Keychains are currently being tested by the Autocrypt team (\url{https://py-autocrypt.readthedocs.io/en/latest/})} Similarly to CONIKS, \keychains consist of chained blocks. Instead of a global log, however, each user has their own \keychain that contains all the information necessary to represent her claims about her own keys, and her beliefs about other users' keys, i.e., her cross-references. Chaining of blocks enables tracking and authenticating the evolution of beliefs and keys. Cross-references enable users to combine their contacts' beliefs---represented by their \keychains---to establish evidence about the binding between identities and keys. 

To address the privacy issues of the Web of Trust cross-reference sharing model, \keychains use cryptographic access tokens to provide fine-grained control on who is allowed to read which claims. Moreover, \keychains' claim encoding schemes make it hard to infer how users' beliefs change over time. Finally, \keychains include mechanisms to reliably and efficiently prevent equivocation within a block, and a mechanism to detect equivocation across blocks. In doing so, \keychains ensure non-equivocation while minimizing the leakage of users' friendship networks.

We prove the security and privacy properties provided by \keychains. We also provide an implementation of \keychains
and show that it scales to accommodate the needs of large groups at an acceptable overhead cost. We also simulate the usage of \keychains for in-band key distribution using the Enron e-mail dataset\footnote{\url{http://www.cs.cmu.edu/~enron/}} in a privacy-preserving way: \Keychain owners reveal cross-references only if that does not leak more information about their social graph than is already revealed by the e-mails themselves. We quantify the degree to which in-band key distribution can protect communication. We show that \keychains improve privacy, without diminishing considerably the ability to encrypt e-mail, and enable the detection of incorrect key information.
Our main contributions are:
\begin{itemize}[noitemsep,nolistsep,leftmargin=\parindent,label=--]
    \item We introduce \keychain, a cryptographic construction based on authenticated data structures. \keychains store claims about keys, thereby supporting key authentication in decentralized environments in a secure and privacy-preserving way.
    \item We define the properties that a decentralized and privacy-pre\-ser\-ving key distribution system should offer and we formally model them for \keychain. 
    \item We show that owners cannot equivocate about contact keys within blocks. Moreover, we provide a novel mechanism that enables \keychain owners to prove that they have not equivocated across different blocks about a particular contact key. Unlike other transparency-backed solutions~\cite{MelaraBBFF15}, auditing the consistency of contact keys is possible without revealing their actual values in the \keychain history.
    \item We provide an implementation of \keychain and show that its computation and bandwidth requirements are reasonable and within reach of modern computers and networks.
    \item We evaluate the effectiveness of decentralized key distribution on a real e-mail dataset. We show that selective privacy-preserving distribution can be almost as effective as broadcasting all known keys, although total encryption is difficult to achieve.
\end{itemize}

    \section{Problem statement and goals}
\label{sec:problem_statement}

We assume a messaging system in which users embed their cryptographic keys in-band, i.e., into the messages themselves or into the message headers, as in Autocrypt. These keys are used to provide message confidentiality using opportunistic encryption~\cite{rfc7435}. That is, the communication is encrypted when users know each others' keys, but falls back to plaintext when they do not. 

Sending keys as part of message headers results in two problems. First, in terms of privacy, adding such headers reveals users' social ties. Second, in terms of security, man-in-the-middle attackers can modify the header contents, since they are not authenticated. Moreover, malicious users can equivocate about others' keys.

\descr{Design goals.} 
We assume that all actors in the system, users and providers, may act maliciously.  
Our goal is to design a data structure that can store the binding between keys and identities, and is suitable for integrating with in-band key distribution. The purpose of the structure is to support key validation, i.e., help users establish the authenticity of user-key bindings, as long as some users in the system are honest. Furthermore, it must protect users' privacy without relying on centralized parties. 

More concretely, we aim at providing the following properties. First, the structure must guarantee the \emph{integrity} and \emph{authenticity} of identity-key bindings, i.e., it should not be possible to replace or inject bindings without being detected.
Second, we want to preserve the \textit{privacy of cross-referenced information} and the \textit{privacy of the social graph}. These properties ensure that only authorized users can access the key material in the structure and the identities of the bindings being distributed. 
Third, the structure must prevent users from \textit{equivocating} other users with respect to the identity-key bindings that they share. That is, a user Owen should \emph{not} be able to show to Alice and Bob different versions of a Charlie's key, even if he withdraws Alice's access to see Charlie's keys. In the latter case if Alice ever regains access, she must be able to detect Owen's misbehavior.
Finally, our construction should not entail significant computational or communication overhead for the end users and providers to enable adoption at large scale.

\descr{Non-goals.} 
In-band key distribution cannot ensure full availability of public encryption keys. The keys of one or more recipients may not be available to a sender at a time of sending a message, and thus, because of the opportunistic encryption operation, the message would be sent in the clear.
We are therefore not concerned with ensuring 100\% availability of keys. Instead, our goal is to secure the keys that \textit{are} distributed without harming privacy. If guaranteeing encrypted communication is absolutely necessary, parties must exchange keys in a reliable way, e.g. through a centralized service or an out-of-band mechanism. 

Furthermore, throughout this paper we consider that users have only one identity, and use one and only one structure to store key bindings of their contacts. If a user wishes to have different identities, she must create one structure per identity.
    
\section{\Keychain design}
\label{sec:claimchain_design}

In this section we introduce \keychain, a structure to store key bindings in a secure and privacy-friendly manner. 

\subsection{Cryptographic preliminaries and notation}\label{sec:prelims}

We denote sampling uniformly at random from a set $X$ as $x \sample X$, and the assignment of an evaluation of a function $f(x)$ to $y$ as $y \gets f(x)$, regardless of whether $f$ is probabilistic or deterministic. We denote concatenation of strings by $\parallel$.

Let $\secparameter$ be the security parameter. \Keychain relies on the following standard cryptographic primitives. 
Let $\enc(k, m) \mapsto c$ and $\dec(k, c)$
denote an IND-CPA secure symmetric authenticated encryption scheme.
\Keychain uses an existentially unforgeable signature given by the algorithms 
$\sig.\keygen(1^\lambda)$ returning the keypair $(\sksig, \pksig)$, $\sgn(\sksig, m)$ returning a signature $\sigma$, and the verification function $\sig.\vrfy(\pksig, \sigma, m) \mapsto \{\top, \bot\}.$ We write $\mydh.\keygen(1^\lambda)$ for the generation of a Diffie-Hellman (DH) keypair $(\skdh, \pkdh)$ using which we can non-interatively compute the shared DH key $s \in \strs$ using $\derivesecret(\skdh, \pkdh^R)$. Finally, let $H$ be a cryptographic hash function from which we derive a family of hash functions~$H_i: \strs \to \{0,1\}^{2\secparameter}, i > 0$.


All schemes use a cyclic group $\group$ of prime order $\grouporder$ generated by $g$. We write $\Zq$ for the integers modulo $\grouporder$. Moreover, we assume the existence of a cryptographic hash function $H_{\group}: \strs \to \group$ that hashes strings to group elements, and a hash function $H_q: \strs \to \Zq$ that hashes strings to the elements of $\Zq$.

\Keychains also require an information-theoretically hiding commitment scheme $\commit(r, m)$ that commits to values $m \in \Zq$ given a randomizer $r \in \Zq$. We instantiate this scheme using Pedersen's commitment scheme~\cite{Pedersen91}. Let $g_1, g_2 \in \group$ be random generators such that the discrete logarithms of $g_1$ and $g_2$ with respect to each other are unknown. Then, $\commit(r, m) = g_1^{r} g_2^{m}$.

\Keychains use standard zero-knowledge proofs of knowledge, and in particular Schnorr's proof of knowledge of discrete logarithms~\cite{Schnorr91}, to prove correctness of claims. We use the Fiat-Shamir heuristic~\cite{FiatS86} to derive non-interactive signature proofs of knowledge. For example, we write:
\begin{equation*}
    \textrm{\textsf{SPK}}\left\{ (r, m): C = g_1^{r} g_2^{m} \right\}(t)
\end{equation*}
to denote the non-interactive signature proof of knowledge on a random string $t$ for which the prover knows the commitment opening $(r, m)$. To focus on the semantics of the proof, we write
\begin{equation*}
    \textrm{\textsf{SPK}}\left\{ (r, m): C = \commit(r, m) \right\}(t)
\end{equation*}
instead, to denote the same proof.

Finally, \keychains use a verifiable random function (VRF)~\cite{MicaliRV99, FranklinZ13}, given by the algorithms $\vrf.\keygen$ and $\vrf.\eval$. The function $\vrf.\keygen(1^{\secparameter})$ returns a keypair $(\skvrf,\allowbreak \pkvrf) = (\skvrf,\allowbreak g^{\skvrf})$. Then, $h = \vrf.\eval(\skvrf,\allowbreak m) = H_\group(m)^{\skvrf}$ is the VRF of the value $m$. Users prove the that $h$ was correctly computed by constructing the proof
\begin{equation*}
    \textsf{SPK}\left\{ (\skvrf): \pkvrf = g^{\skvrf} \land h = \vrf.\eval(\skvrf, m) \right\}().
\end{equation*}
The properties of VRF hashes are similar to that of cryptographic hashes: uniqueness of $h$ for a given message and private key, collision resistance, and pseudorandomness (assuming no access to the corresponding proof)~\cite{PapadopoulosWHN17}.

\begin{figure}
    \centering
    \begin{tikzpicture}[%
    block/.style={
      draw, rectangle, ultra thin,
      minimum height=1.25em,
      text centered,
      text width=2.2em,
      text depth=0.5em,
      fill=white,
      text height=1em,
    },
    payload/.style={
      draw, rectangle, ultra thin,
      minimum height=2em,
      align=left,
      text width=8.5em,
      inner sep = 3pt, 
    },
    chainlabel/.style={align=center,font=\small\itshape},
    box/.style={thick}]

    \matrix (bi) [matrix of math nodes,column sep=-0.125pt,row sep=5mm,nodes={block}]{
      X_{i+1} & \ptr_{i+1} & \sigma_{i+1} \\
      X_{i} & \ptr_{i} & \sigma_{i} \\[5mm]
      X_{0} & - & \sigma_{0} \\
    };
    \node[right=2mm of bi-1-3] {$B_{i+1}$};
    \node[right=2mm of bi-2-3] {$B_{i}$};
    \node[right=2mm of bi-3-3] {$B_0$};
    \node[above=-1mm of bi,text depth=0.2em] {Chain};

    \matrix (block) [left=15mm of bi.north west,
        yshift=0em,
        anchor=north east,
        matrix of nodes,
        row sep=-0.125pt,
        nodes={payload},
        row 3 column 1/.style={nodes={payload,minimum height=3.5em}}]{
      Block index: $i$ \\
      Nonce: $\nonce$ \\
      {Metadata:\\ $\pksig, \pkvrf, \pkdh$} \\
      Public data \\
      Block map \\
    };
    \node[above=-1mm of block,text depth=0.2em] {Payload $X_i$};

    \node[below=0mm of bi-2-2] {$\vdots$};

    \draw[->] (bi-1-2) -- (bi-2-2);

    \draw[box] (bi-1-1.north west) rectangle (bi-1-3.south east);
    \draw[box] (bi-2-1.north west) rectangle (bi-2-3.south east);
    \draw[box] (bi-3-1.north west) rectangle (bi-3-3.south east);
    \draw[box] (block-1-1.north west) rectangle (block-5-1.south east);

    \begin{scope}[on background layer]
        
        \fill[gray] (bi-2-1.north east) -- (block-1-1.north east) -- (block-5-1.south east) -- (bi-2-1.south east) -- cycle;
    \end{scope}

    \end{tikzpicture}
    \caption{\Keychain block structure}
    \label{fig:block_structure}
\end{figure}

\subsection{Overview}

We consider that each user has a state made of information about herself and her beliefs about other users' states. 
At a given point in time a users' state is represented as a set of statements, called \emph{claims}. Claims can be of two kinds. The first type of claim refers to a user's own state. In particular, these may be statements on the user's encryption keys, identity information (screen name, real name, or e-mail), or other cryptographic material such as verification keys to support digital signatures.
The second type of claims, we call them cross-references, refer to other users' states.
A claim owner creates a cross-reference to endorse the referenced user's state as being authoritative, i.e., a cross-reference indicates the owner's belief that the self key material found in those users' state is correct.
A user's state evolves over time as she rotates her keys and observes the evolution of others' states. She stores snapshots of her state in a cryptographic data structure called a \emph{\keychain}.

The core element of a \keychain is a block. A block includes all claims that the owner endorses at the time when she creates the block, i.e., a block is a snapshot of the owner's state. Blocks form a chain. A block contains a payload $X$, a pointer to the previous block $ptr$, and a digital signature $\sigma_i$ on the payload and the pointer. See Figure~\ref{fig:block_structure}. 
The payload of the previous block $X_{i-1}$ contains the verification key $\pksig^{(i-1)}$ for the private key $\sksig^{(i-1)}$ that signs $\sigma_i$.

We now describe each of the block components in detail.

\descr{The payload $X_i$} has the following content (see Figure~\ref{fig:block_structure}, left):

\begin{itemize}[noitemsep,nolistsep,leftmargin=\parindent,label=--]

\item \textit{Block index.} The block's position in the chain. The index of the genesis block is~0.

\item \textit{Nonce.} A fresh cryptographic nonce used to `salt' all cryptographic operations within the block. It ensures that the information across blocks is not linkable. 

\item \textit{Metadata.} The current signature verification key of the owner $\pksig$, that is used to authenticate the next block of the \keychain; the current key $\pkvrf$ to compute a verifiable random function used to support non-equivocation; and a Diffie-Hellman key $\pkdh$ used to provide claim privacy. 

\item \textit{Public data.} Application-specific data the owner wishes to make publicly visible. For in-band key distribution we set this to the owner's self-claim on her current public encryption key.

\item \textit{Block map.} A high-integrity key-value map storing the claims, as well as access tokens that express access-control rights. This map 
has two core properties: i) a key can only be resolved to a single value, and ii) it enables the generation and verification of efficient proofs of inclusion of claims or access tokens. We implement the map using \emph{unique-resolution key-value Merkle trees}, explained in more detail in Section~\ref{sec:block_map}. For our use case the map only contains cross-references.


\end{itemize}


\descr{The signature }$\sigma_i = \sgn(\sksig^{(i-1)}, (X_i, \ptr_i))$ authenticates the current block. 
A block $B_i$ must have a valid signature under the verification key indicated in the payload of the previous block $B_{i-1}$. The genesis block of a \keychain is `self-signed'. The corresponding initial public signing key is included in the initial payload. 
Each block in the chain contains enough information to authenticate past blocks as being part of the chain, validate the next block, and, by transitivity, all future blocks as being valid updates. Therefore, a user with access to a block of a chain that she believes is authoritative, 
can both audit past states of the chain, and authenticate the validity of newer blocks. 



\subsection{Low-level operations}\label{sec:details}

We now describe how we implement claims and access tokens, and how they are combined into the block map. 

\descr{Claims.}
\label{sec:claimdefinition}
%
%
We model claims as a tuple composed of a \emph{label} $l$ and a \emph{body} $m$. The label is a well-known identifier associated with the identity of the user to whom the claim refers. The body is the state of that user at the time when the claim is generated, represented as the latest block of this user's \keychain. For instance, a claim (`bob@gmail.com', $B$) represents the \keychain owner's belief that the current state of the user associated with this Gmail account is represented by the block $B$.

\begin{figure*}[t]
\fbox{
\begin{minipage}[t]{.48\linewidth}
\begin{algorithmic}[1]
\Procedure{EncClaim}{$\skvrf, l, m, \nonce$}
    \Let{$h$}{$\vrf.\eval(\vrftokenExpr)$}
    \Let{$i$}{$H_1(h)$}
    \State{$r \sample \Zq,\quad \claimcommit \gets \commit(r, H_q(m))$}
    \Sample{$\proofkey$}{$\{0,1\}^{\secparameter}$}
    \Let{$\pi$}{$\textrm{\textsf{SPK}}\{ (\skvrf, r): \pkvrf = g^{\skvrf} \land
    \newline\mbox{}\qquad\qquad
    h = \vrf.\eval(\vrftokenExpr) \land
    \newline\mbox{}\qquad\qquad
    \claimcommit = \commit(r, H_q(m))\}(\proofkey)$}
    \State{$\claimkey \sample \{0,1\}^{\secparameter},\quad \claim \gets \enc(\claimkey, \pi \parallel m) \parallel \claimcommit$}
    \State \Return $r, h, \claimkey, \proofkey, (i, c)$
\EndProcedure
\label{alg:lowlevel}
\end{algorithmic}

\begin{algorithmic}[1]
\Procedure{EncCap}{$\skdh, \pkdh^R, l, h, \claimkey, \proofkey, \nonce$}
    \Let{$s$}{$\derivesecret(\skdh, \pkdh^R)$}
    \Let{$i_\capab$}{$H_3(s \parallel l \parallel \nonce)$}
    \Let{$k_\capab$}{$H_4(s \parallel l \parallel \nonce)$}
    \Let{$\capab$}{$\enc(k_\capab, h \parallel \claimkey \parallel \proofkey)$}
    \State \Return $(i_\capab, \capab)$
\EndProcedure
\end{algorithmic}

\end{minipage}%
\begin{minipage}[t]{.48\linewidth}
\begin{algorithmic}[1]
\Procedure{DecClaim}{$\pkvrf^O, h, l, \claimkey, \proofkey, c, \nonce$}
    \Let{$\bar{c} \parallel \claimcommit$}{$c$}
    \Let{$\pi \parallel m$}{$\dec(\claimkey, \bar c)$}
    \LineComment{{\footnotesize Note the verification of $\pi$ requires $\pkvrf^O$, $h$, $l$, $\proofkey$, $\claimcommit$, $m$, and $\nonce$.}}
    \If {$\pi$ is not a valid proof}
        \State \Return $\bot$
    \EndIf
    \State \Return $m$
\EndProcedure
\end{algorithmic}
\mbox{}\\[4.7mm]
\begin{algorithmic}[1]
\Procedure{DecCap}{$\skdh, \pkdh^O, l, \capab, \nonce$}
    \Let{$s$}{$\textsf{SharedSecret}(\skdh, \pkdh^O)$}
    \Let{$k_\capab$}{$H_4(s \parallel l \parallel \nonce)$}
    \Let{$h \parallel \claimkey \parallel \proofkey$}{$\dec(k_\capab, \capab)$}
    \Let{$i$}{$H_1(h)$}
    \State \Return $i, h, \claimkey, \proofkey$
\EndProcedure
\end{algorithmic}

\end{minipage}
}
\caption{Low-level \keychain operations}
\label{fig:low_level_algorithms}

\end{figure*}

For privacy reasons, claims in a \keychain are encrypted. Thus, they cannot be found directly by other users. To enable efficient search for concrete claims within a \keychain block, we introduce a \emph{lookup key}, or \emph{index}, $i$ for each claim.

We illustrate the encoding of claims in procedure \Call{EncClaim}{}, see Figure~\ref{fig:low_level_algorithms}. Consider a claim ($l$, $m$) that is to be included in a block. We first compute the unique VRF of its label and derive the claim's lookup key $i$ [lines 2--3]. Note that the computation of $h$ includes a per-block nonce to ensure that the lookup keys for a given claim label look different across blocks, and therefore no patterns can be inferred from their appearance. 

Recall that VRF hashes are unique. We use them to derive lookup keys to ensure that, given a label, all users retrieve the same claim, effectively supporting non-equivocation within a block. 
This use of VRFs is inspired by CONIKS~\cite{MelaraBBFF15}. We also include additional cryptographic elements in our encoded claims in order to obtain a stronger non-equivocation property than CONIKS. Specifically, we guarantee that equivocation is detectable \emph{across blocks} without the need for key owners to intervene. 
The need for a detection mechanism stems from the fact that \keychain owners can give and withdraw access to claims at will. Thus, they can try to equivocate others by giving them access to different information in different blocks. To make this misbehaviour detectable we provide \keychain owners with the ability to prove statements about claims that other users cannot see. This way, if a proof cannot be completed, equivocation is revealed (see Section~\ref{sec:inter_block_equivocation} for more details).

To prove statements on claim contents without revealing them, 
we commit to the claim body $m$ [line~4]. Moreover, we construct a non-interactive proof $\pi$ on $\proofkey$ proving that the VRF $h$ is correct and that the commitment $\claimcommit$ commits to $m$ [lines~5--6]. When decoding a claim, users verify the proof $\pi$. The proof verification key $\proofkey$ ensures that only authorized users can verify this proof.

Once $\pi$ is computed, we encrypt $m$ and $\pi$ with a random key~$\claimkey$ [line~7]. Finally, the claim encoding consists of this ciphertext and the commitment \claimcommit: $c = \enc(\claimkey, \pi \parallel m) \parallel \claimcommit.$

The binding property of the commitment $\claimcommit$ and the validation provided by the proof $\pi$ also ensure that all users with access to an encoded claim $c$ must recover the same claim body~$m$.
This makes this encoding scheme an instance of committing, or non-deniable, encryption~\cite{GrubbsLR17}. Hence, a malicious owner can not equivocate by supplying two different claim encryption keys to different users.

The procedure \Call{DecClaim}{}, see Figure~\ref{fig:low_level_algorithms}, describes the decoding of a claim. It takes as input the encryption key $k$, the VRF hash $h$, and the proof verification key $\proofkey$ from the owner (see below). Then, users can decrypt the ciphertext using $k$ [lines 2--3]; and verify the claim proof $\pi$, which includes the verification of the correctness of the VRF hash $h$ and of the commitment~$\claimcommit$ [lines 4--6]. 


Our claim encoding scheme offers four distinct security advantages. First, the use of the VRF ensures that lookup keys can only be produced by the owner of the chain, which as we describe below supports access control. Second, the lookup key is unique for a given label, and thus can be used to support non-equivocation for claims within a block. Third, the lookup key $i$ and claim encoding $c$ leak no information about the claim label or body. Fourth, it supports zero-knowledge proofs about claim contents, which enables the detection of equivocation across blocks.

\descr{Access capabilities.}\label{sec:access_control}
\Keychain owners create cryptographic access tokens called \emph{capabilities} to ensure that only authorized users can access specific claims. A single capability grants one authorized user access to one claim. We call the authorized users \emph{readers}.

An encoded capability is an encryption of all the values needed to obtain a claim lookup key and decode the corresponding claim: the encryption key $k$, the VRF hash $h$, and the proof verification key $\proofkey$. We encrypt these using a key derived from a shared secret~$s$ between the chain owner and the reader. Similarly to claims, encoded capabilities have an associated lookup key $i_\capab$, and a body~$\capab$. 

The procedure \Call{EncCap}{}, see Figure~\ref{fig:low_level_algorithms}, describes how to encode capabilities. First, it computes the shared Diffie-Hellman secret~$s$ using the owner's private DH key $\skdh$ and reader's public DH key $\pkdh^{R}$ [line 2]. The latter is available in the metadata of the reader's \Keychain. We use the secret $s$ to derive both the capability lookup key $i_\capab$ [line 3], and the capability encryption key $k_\capab$ [line 4]. Then we encrypt the values $h$, $k$, and $\proofkey$ using the key $k_\capab$ to obtain the capability encoding [line 5]: $\capab = \enc(k_\capab, h \parallel k \parallel \proofkey).$
 

Chain owners store the encoded claim $c$ under the lookup key $i$ in the block map. Similarly, they store the encoded capability $\capab$ under the lookup key $i_{\capab}$. To find a capability corresponding to a claim with label $l$ in a \keychain block, a reader first computes the lookup key $i_\capab$ for label $l$ using the shared secret with the \keychain owner. If the corresponding capability~$\capab$ is in the block, she decodes it using \Call{DecCap}{}, see  Figure~\ref{fig:low_level_algorithms}. First, the reader derives the shared secret~$s$ [line 2], and computes the capability encryption key $k_\capab$ using the claim label $l$ [line 3]. She can then decrypt $\capab$ using $k_\capab$ [line 4], obtaining the label's VRF hash $h$, the encryption key $k$, and the proof verification key $\proofkey$. With this information the reader can compute the claim's lookup key $i = H_1(h)$, find the claim, and decode it using \Call{DecClaim}{}.



\descr{Block map.}\label{sec:block_map} Encoded claims and capabilities are stored in the block map. 
We implement the block map using a unique-resolution key-value Merkle tree. Unlike a standard Merkle tree that implements an authenticated set data structure, a key-value tree is an instance of an authenticated dictionary~\cite{CrosbyW11}. It can be efficiently queried for a value that corresponds to a given lookup key. Our construction is similar to that of a binary search tree: the intermediate nodes contain pivots that define whether the querier should follow the left child or a right child; the leaf nodes contain the values. The construction allows queriers to be sure that retrieved values are unique, i.e., there cannot exist any other leaf nodes that correspond to the queried lookup key. We call this the \emph{unique resolution property.} We formally define the property in Experiment~\ref{exp:unique_resolution} and prove it in Theorem~\ref{thm:unique_resolution} (both in Appendix~\ref{app:structures}). We refer to Appendix~\ref{app:structures} for further details on the construction.

The unique-resolution property guarantees that for a given lookup key $i$, respectively $i_\capab$, there can only be one claim $c$, respectively capability $\capab$. The uniqueness of the VRF value $h$, the property of the tree, and the commitment in the claim encoding, ensures that a \keychain owner can not equivocate within a block.

We note that \Keychain blocks only need to include the root hash of the Merkle tree, not the whole tree. This is because our Merkle tree construction allows to produce an inclusion proof for items: a path from the root to the leaf node which contains the item. Thus, providing others with this paths is enough to convince them that the items are in the tree defined by the root in the block.

\subsection{High-level operations}
So far we have described how users can encode and decode claims. We now outline how these claims can be included in a \Keychain and read from it. At a glance, owners create blocks with a set of encoded claims and corresponding encoded capabilities, and use them to extend their \keychains. Any user can validate the authenticity and integrity of the chain. Moreover, readers can retrieve the claims they are authorized to read. Section~\ref{sec:using} illustrates how these operations can be used in the context of in-band key distribution.

\descr{Content-addressable store.} \Keychain owners store their blocks and trees in mutable \emph{content-addressable stores}. These are key-value stores where the key must be the hash of the corresponding value. They are a good fit for \keychains because i) it is easy to verify their integrity by checking that all keys are the hashes of the respective objects they map to; and ii) an incomplete store cannot lead to an erroneous decision on the authenticity, inclusion or exclusion of any block or tree node.
%
The store supports two operations:
\begin{itemize}[noitemsep,nolistsep,leftmargin=\parindent,label=--]
    \item \textsc{Put}($v$). Record the value $v$ in the store.
    \item \textsc{Get}($h$). Return $v$ such that $h = H(v)$, if present in the store.
\end{itemize}


\descr{Extending a chain.} Whenever an owner decides to add new claims to her \keychain she uses the procedure \Call{ExtendChain}{} in Figure~\ref{fig:extend_and_query}. This procedure takes as input the public application data, a set of claims $(l_j, m_j)$ to add to the block, an access control set $\acm$ consisting of the authorized reader-label pairs for these claims, the cryptographic keys necessary to create the block (keypairs for signatures, DH key exchange, and VRF), as well as the previous signing key $\sksig'$ included in the previous block, the pointer $\ptr$ to that block, and, finally, the user's store.


To create a block the user first generates a random nonce that is used for all encoding operations [step~1]. She then encodes all the claims and capabilities [steps~2--3]. The set $S$ of encoded values and their respective lookup keys are used to construct a Merkle tree with root hash \textsf{MTR}, as described in Algorithm~\ref{alg:build_tree} in Appendix~\ref{app:structures} [steps~4--5].
She then constructs the block payload $X$ using the \nonce, the block metadata containing the public keys $(\pkdh, \pksig, \pkvrf)$, the public application data, and the root \textsf{MTR} of the Merkle tree. She signs the payload $X$ and the pointer $\ptr$ to the previous block using the previous signing key $\sksig'$ (see Figure~\ref{fig:block_structure}) [step~6].
Finally, she puts the obtained block, $B = (X, \ptr, \sigma)$, into the content-addressable store [step~8].

\begin{figure*}[t]
\fbox{
\begin{minipage}[t]{.45\linewidth}
\begin{algorithmic}
\Procedure{ExtendChain}{\textsf{data}, \textsf{claims}, \acm, \textsf{keys}, \ptr, \textsf{store}}
\begin{enumerate}[leftmargin=1.6em]
    \item Randomly generate a $\lambda$-bit nonce $\nonce$. 
    \item For each claim $(l, m)$ in \textsf{claims}:
        \[r, h, \claimkey, \proofkey, (i, \claim) \gets \Call{EncClaim}{\skvrf, l, m, \nonce},\]
        Additionally, record each randomizer $r$.
    \item For each tuple $(\pkdh^{R}, l)$ in $\acm$, encode a capability:
        \[(i_\capab, \capab) \gets \Call{EncCap}{\skdh, \pkdh^{R}, l, h, \claimkey, \proofkey, \nonce}\]
    \item Construct a set $S$ containing the encoded claims and capabilities. 
    \item Build a unique-resolution key-value Merkle tree from the set of entries $S$: $\textsf{MTR} \gets \Call{BuildTree}{S, \textsf{store}}$. 
    \item Compute the block
        \[B \gets \left((X, \ptr), \sigma = \sgn(\sksig', (X, \ptr)) \right)\]
    where $X$ contains the \nonce, \textsf{MTR}, the block metadata (containing the public keys $\pkdh, \pksig, \pkvrf$), and the public application data.
    \item Put the block $B$ into the store using $\Call{Put}{B}$
    \item \Return $H(B)$.
\end{enumerate}
\EndProcedure
\end{algorithmic}
\end{minipage}\hspace{.04\linewidth}
\begin{minipage}[t]{.48\linewidth}
\begin{algorithmic}
\Procedure{GetClaim}{$\skdh, l, \ptr, \textsf{store}$}
\begin{enumerate}[leftmargin=1.6em]
    \item Get the block from the store: $B \gets \Call{Get}{\ptr}$.
    \item Retrieve owner's public keys ($\pkdh^O, \pkvrf^O, \pksig^O$), the block's nonce $\nonce$, and the block map hash \textsf{MTR} from block $B$.
    \item Compute the capability lookup key:
        \[i_\capab \gets H_3(s \parallel l \parallel \nonce),\]
        where $s$ is the shared secret $s = \derivesecret(\skdh, \pkdh^O)$.
    \item Get the encoded capability from the tree:
        \[\capab \gets \Call{QueryTree}{\textsf{MTR}, i_\capab, \textsf{store}},\]
    \item Obtain the claim lookup key $i$, the VRF hash $h$, the claim encryption key $\claimkey$, and the proof verification key~$\proofkey$:
        \[i, h, \claimkey, \proofkey \gets \Call{DecCap}{\skdh, \pkdh^O, l, \capab, \nonce}\]
    \item Get the encoded claim from the tree:
        \[c \gets \Call{QueryTree}{\textsf{MTR}, i, \textsf{store}}\]
    \item Decode $c$ and verify the correctness of the claim:
        \[m \gets \Call{DecClaim}{\pkvrf^O, l, h, \claimkey, \proofkey, c, \nonce}\]
    \item If any of the lookups failed, return $\none$. If the verification failed, return $\bot$. Otherwise, return $m$.
\end{enumerate}
\EndProcedure
\end{algorithmic}
\end{minipage}
}
\caption{Extending and querying \keychains}
\label{fig:extend_and_query}
\end{figure*}

\descr{Chain validation.} Readers must always validate that new blocks correctly extend the chain that they have previously seen. To do so, users run the  procedure \Call{ValidateBlocks}{}, see Figure~\ref{fig:validate_blocks}. The input to this procedure is a list of blocks $B_i$, where $B_0$ is the last validated block, and $B_1$ through $B_t$ are the new blocks to be validated. For each new block $B_i$ the reader first checks if the block includes all elements: the payload, signature, and the pointer [step~1]. Next, she retrieves the public key $\pksig$ from the preceding block $B_{i-1}$ and verifies the signature in the block $B_i$ [steps~2--3]. This verifies the authenticity of the chain. Finally, she verifies that the pointer in the block $B_i$ is a hash of the preceding block $B_{i-1}$, which verifies the integrity of the chain [step~4].

\begin{figure}[t]
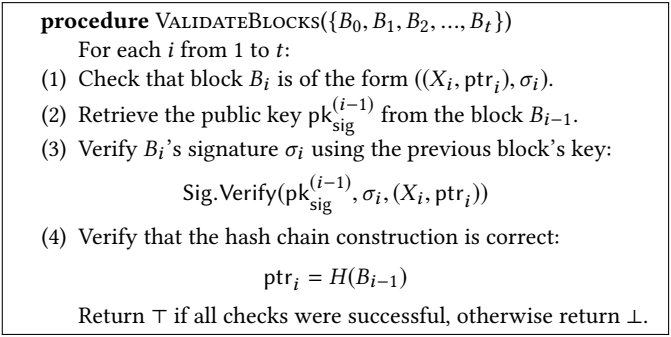

\fbox{
\begin{minipage}{\linewidth}

\begin{algorithmic}
\Procedure{ValidateBlocks}{$\{B_0, B_1, B_2, ..., B_t\}$}
\begin{enumerate}[leftmargin=1.6em]
    \item[] For each $i$ from $1$ to $t$:
    \item Check that block $B_i$ is of the form $((X_i, \ptr_i), \sigma_i)$.
    \item Retrieve the public key $\pksig^{(i-1)}$ from the block $B_{i-1}$.
    \item Verify $B_{i}$'s signature $\sigma_i$ using the previous block's key:
        \[ \sig.\vrfy(\pksig^{(i-1)}, \sigma_i, (X_i, \ptr_i))\]
    \item Verify that the hash chain construction is correct:
        \[ \ptr_i = H(B_{i-1}) \]
    \item[] Return $\top$ if all checks were successful, otherwise return $\bot$.
\end{enumerate}
\EndProcedure
\end{algorithmic}

\end{minipage}
}
\caption{Block validation}
\label{fig:validate_blocks}
\end{figure}

\descr{Retrieval of the claim by label.} After having validated the \keychain of an owner, the reader can query it to retrieve claims of interest using procedure \Call{GetClaim}{} in Figure \ref{fig:extend_and_query}. 
This procedure takes as input the reader's private Diffie-Hellman key $\skdh$, the claim label $l$, a pointer to the latest block $\ptr$ and the owner's store. The reader retrieves the block, and parses it to get the block's nonce, the owner's public keys, and the block map hash [steps~1--2]. She then derives the capability lookup key using the DH secret shared with the owner [step~3], queries the block map to retrieve the corresponding capability [step~4]. We refer to Algorithm~\ref{alg:query_tree} in Appendix~\ref{app:structures} for the details of the \Call{QueryTree}{} algorithm. Next, she runs the decoding procedure to obtain the claim lookup key $i$, the VRF hash $h$, the claim encryption key $\claimkey$, and the proof verification key $\proofkey$ [steps~5]. She then obtains the claim encoding $c$ by querying the tree with the claim's lookup key $i$ [step~6]. Finally, the reader decodes and verifies the encrypted claim using $h$, $\claimkey$, $\proofkey$ [step~7].



\medskip



\subsection{Security and privacy properties}
We now sketch why the \keychain design fulfills the security and privacy objectives established in Section~\ref{sec:problem_statement}.

We note that \emph{authenticity} and \emph{integrity} are guaranteed through the usage of signature and hash chains respectively. Signatures guarantee that the information stored in a \keychain has been added by the owner of the chain. The usage of cryptographic hash functions for constructing the pointers between blocks guarantees that tampering with the \keychain content will be detected. 

\descr{Privacy.} \Keychains provide \emph{privacy of content} and \emph{privacy of the social graph}. We capture these through the following properties:
\begin{itemize}[noitemsep,nolistsep,leftmargin=\parindent,label=--]
    \item \emph{Capability-reader unlinkability.} The adversary cannot determine for which honest user a capability has been created. 
    \item \emph{Claim privacy.} The adversary cannot learn anything about the labels and bodies of claims for which it does not have the corresponding capabilities. 
\end{itemize}

\smallskip

Informally, these properties are provided by \keychains because the adversary can neither derive the capability lookup key, nor learn the contents of the encoded capability without the knowledge of the shared secret used to encrypt the them. This implies that an adversary without this key cannot read capabilities nor learn to whom they are destined  (capability-reader unlinkability). Since the adversary cannot read the capability, it also  does not learn the VRF hash $h$ required to compute the claim lookup key, nor the claim encryption key $\claimkey$. Moreover, the pseudorandomness of the VRF hash $h$ ensures that the adversary cannot compute $h$ without the cooperation of the chain owner. Thus, the adversary cannot check whether a particular claim is included in the block.

Following a similar reasoning, the adversary cannot learn the content of a claim from its lookup key. Furthermore, the encoded claim $c$ does not reveal anything about the claim, except its length. Therefore, claim privacy holds, as long as all claims are of the same length, or padded to the same length.

We formalize these properties in Experiments~\ref{exp:claimPriv} and~\ref{exp:capReaderUnlink}, and prove them in Theorems~\ref{thm:claim_privacy} and~\ref{thm:cap_reader_unlink} in Appendix~\ref{app:privacy_proofs}.

\descr{Non-equivocation.} Our construction also \emph{prevents equivocation}. Specifically, it guarantees the following two properties:
\begin{itemize}[noitemsep,nolistsep,leftmargin=\parindent,label=--]
    \item \emph{Intra-block non-equivocation.} Within a given block, a \keychain owner cannot include two different bodies encrypted to different readers, having the same claim label.
    \item \emph{Detectable inter-block equivocation.} For any subset of \keychain blocks the owner can produce a proof that, for a given label $l$, all claims in these blocks belong to some set of allowed claims $M$ without revealing the claims themselves.
\end{itemize}
The latter property ensures that a user cannot selectively withdraw access rights between blocks to equivocate users. We detail this attack and the proof that mitigates it in Section~\ref{sec:inter_block_equivocation}.

\smallskip

The intra-block non-equivocation relies on three properties of the \keychain construction. First, the uniqueness of the VRF hash $h$ ensures that for a given label all readers will compute the same claim lookup key. Second, the unique-resolution property of our Merkle tree ensures that for a given lookup key all readers obtain the same claim encoding. Third, the claim commitment ensures that all readers will decrypt the same claim body.


We formalize both properties in Experiments~\ref{exp:block_non_eq} and~\ref{exp:deteq}, and prove them in Theorems~\ref{thm:block_non_eq} and~\ref{thm:deteq} respectively in Appendix~\ref{app:equiv_proofs}.

    \section{Using \keychains to secure in-band key distribution}
\label{sec:using}

Recall from Section \ref{sec:problem_statement} that the goal of the \keychain data structure is to improve the security and privacy of in-band key distribution. In this section we describe how this can be achieved.

\descr{Building a \keychain.} To use a \keychain, a user has to build blocks, containing her claims. 
When to update the \keychain depends on the owner's preferences. For example, a user can update her chain whenever she rotates her own encryption public key, or when she needs to distribute new cross-references that are not present in her \keychain yet.

To update her chain, an owner runs the \Call{ExtendChain}{} procedure (Figure~\ref{fig:extend_and_query}). 
For this purpose, she encodes a set of claims representing all her current views of other users as cross-references in the following way. For each contact, she makes a cross-claim $(l, m)$, where $l$ is the contacts' e-mail, and $m$ is the contact's latest block.

Then, the owner must decide which of these claims she intends to make available to which of her contacts. This choice determines the access control set $\acm$. The access control policy is governed by the user's privacy preferences. Defining these preferences is beyond the scope of this work.

Recall that to implement access control the owner uses shared DH secrets with each of the readers. Thus, the owner needs to complete a round-trip of messages with a contact before she can give this contact access to her claims. 

Finally, the owner puts her own public encryption key into the public application data section of the block. For our use case of in-band key distribution we assume that all keys are constant size. Hence blocks, and therefore claims, are constant size too. This ensures claim privacy even though the encryption scheme leaks the length of the plaintext.

\descr{Distributing \keychains.} To fulfill their purpose, \keychains must be made available to other users. For this, a user includes a content-addressable store containing blocks from her \keychain, and a subset of the Merkle tree nodes from her latest \keychain block, in every message she sends. The user keeps a record of which blocks they have sent to whom. To select the blocks to be sent, the sender checks her record, and includes all her \keychain blocks that the recipients of the current e-mail have not received yet.

The subset of the Merkle tree is selected to ensure that all information in the \keychain relevant to her message can be authenticated. More concretely, the sender produces resolution paths on the tree (see the \Call{GetIncPath}{} procedure in Algorithm~\ref{alg:query_tree} in Appendix~\ref{app:structures} for the details) for each relevant claim and capability.


\descr{Receiving messages and validating \keychains.} 
Upon receiving a message with a store containing \keychain data, a user first validates the received chain, running the \Call{ValidateBlocks}{} procedure (Figure~\ref{fig:validate_blocks}) to check if the new blocks extend a chain that has been seen previously. If the validation succeeds, the owner checks the \emph{consistency} of the cross-references in the newly received part of the chain, i.e., whether all the cross-references to Charlie point to the blocks on a single chain. This partially prevents malicious chain owners from cross-referencing fake chains. See Section~\ref{sec:inter_block_equivocation} for an example of such an attack, and the details of a consistency check procedure in case the receiver does not have access to the claim in some of the received blocks.
If both checks succeed, she stores all the received blocks and tree nodes into her \emph{gossip storage}. This enables her to query the sender's \keychain later. The gossip storage contains all the block and tree nodes the user has received over time. 

\descr{Message encryption.} 
Following the opportunistic encryption paradigm, before sending a message, the sender checks if she has learned the public keys of all of the recipients through the \keychains she has received over time. If she cannot find all keys, she sends the message in plaintext.

To find the encryption keys she proceeds as follows. For every recipient with e-mail address $l$, and every \keychain with head $\ptr$ in her gossip storage \textsf{gossip\_store}, she runs \Call{GetClaim}{$\skdh,\allowbreak l,\allowbreak \ptr,\allowbreak \textsf{gossip\_store}$}, see Figure~\ref{fig:extend_and_query}, to find out whether it includes cross-references to this recipient. For every hit, she parses the corresponding claim and adds the cross-referenced \keychain block of the recipient to a \emph{social evidence set} for this recipient. 
She then identifies the most recent block of the recipient's \keychain (out of those present in her evidence set), i.e., the one that forms the longest hash chain, and uses the encryption public key in that block to encrypt the message. 
As a result of this process, the sender may discover new blocks of the recipient's chain. She can then include the updated views as cross-references next time the chain is extended.

\descr{Resolving conflicts.}
This key resolution process may reveal conflicting views. For example, the blocks in the evidence set could point to two or more distinct chains. Another possibility is there could be a `fork': two valid blocks with the same block index that extend a common parent block. 
In either case, \keychains conflicts are detectable and generate cryptographically non-repudiable evidence. The design of mechanisms for sharing such evidence and deciding how to act on it is out of scope of this work. In Section~\ref{sec:simulations}, however, we empirically measure the number of distinct views that a sender would on average have about a recipient, to quantify if resolving conflicts is possible at all in a decentralized setting.

\subsection{Detecting inter-block equivocation}\label{sec:inter_block_equivocation}

\begin{figure}
    \centering
    \begin{tikzpicture}[%
    font=\small,
    block/.style={
      draw, rectangle,
      minimum height=2em,
      text centered,
      text width=2em,
      anchor=mid
    },
    xref/.style={->,dashed},
    link/.style={->},
    chainlabel/.style={align=center,font=\small\itshape}]
    \matrix (o) [matrix of math nodes,column sep=12mm,nodes={block}]{
    O_1 & O_2 & O_3 & O_4 \\
    };
    \matrix (f) [above=6mm of o,matrix of math nodes,column sep=15mm,nodes={block}]{
    C_1 & C_2 & C_3 \\
    };
    \matrix (c) [below=6mm of o,matrix of math nodes,column sep=25mm,nodes={block}]{
    F_1 & F_2\\
    };

    \node[left=0mm of o,chainlabel] (lo) {Owen's\\chain};
    \node[above=5mm of lo,chainlabel] {Real\\Charlie's\\chain};
    \node[below=5mm of lo,chainlabel] {Fake\\Charlie's\\chain};

    \draw[xref] (o-1-1) -- node[right=1mm,align=left] {xref to Charlie\\for Alice} (f-1-1);
    \draw[xref] (o-1-2) -- node[right=1mm,align=left] {xref to Charlie\\for Bob} (c-1-1);
    \draw[xref] (o-1-3) -- node[right=1mm,align=left] {xref to Charlie\\for Bob} (c-1-2);
    \draw[xref] (o-1-4) -- node[left=1mm,align=right] {xref to Charlie\\for Alice} (f-1-3);

    \draw[link] (o-1-2) -- (o-1-1);
    \draw[link] (o-1-3) -- (o-1-2);
    \draw[link] (o-1-4) -- (o-1-3);
    \draw[link] (f-1-2) -- (f-1-1);
    \draw[link] (f-1-3) -- (f-1-2);
    \draw[link] (c-1-2) -- (c-1-1);
    \end{tikzpicture}

    \caption{Inter-block equivocation}
    \label{fig:inter_block_equiv}
\end{figure}
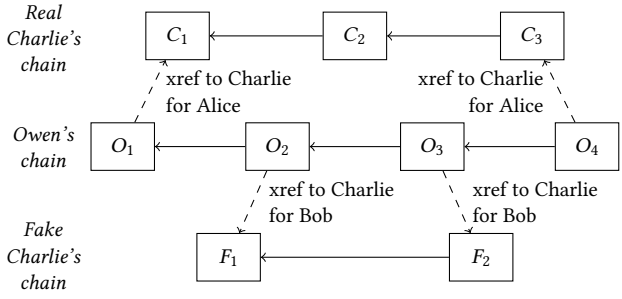

\Keychain's intra-block non-equivocation property ensures that all readers of cross-references to Charlie's chain see the same cross-reference in each block. However, chain owners may try to present different views to different users in different blocks by abusing the access-control mechanism. 
Thereby, the chain owner can equivocate between blocks.


Consider the following example, illustrated in Figure~\ref{fig:inter_block_equiv}, in which the chain owner Owen shows Bob a fake cross-reference to Charlie's chain, while showing the correct cross-reference to Alice. To do so, he never lets Alice and Bob see claims about Charlie's chain in the same block. In block 1, he gives access to Alice, but not to Bob, while in blocks 2 and 3, he gives access to Bob, but not to Alice. Finally, in block 4, Owen again gives access to Alice but not Bob. If Owen has claims about Charlie's true chain in blocks 1 and 4---the ones that Alice can read---and false claims about Charlie's chain in blocks 2 and 3---the ones Bob can read---he is effectively launching an equivocation attack.

A trivial solution to prevent this attack would be to, upon suspicion, allow Alice and Bob to inquire about claims related to Charlie in the blocks where they do not have access. However, this can leak information about if and when the chain owner learned about Charlie's updates. To be able to withdraw the access while preventing the described attack in a privacy-preserving way, \keychain enables the chain owner to prove, in zero knowledge, that she did not equivocate in the blocks where the cross-references were not accessible by the reader.

Consider again our example in Figure~\ref{fig:inter_block_equiv}. When Alice regains access to Charlie's references in block 4, she can use a detection mechanism to detect Owen's equivocation attempt. In other words, she can determine that in the intermediate blocks 2 and 3, where she did not have access to the cross-references about Charlie, Owen referenced a different chain than the one she sees. Bob would also detect the equivocation if he regains read access.

To enable detection, upon giving the access to Alice in block 4 again, Owen constructs a non-equivocation proof as follows.
\begin{enumerate}[leftmargin=1.6em]
    \item Owen recomputes the VRF hashes $h_i = \vrf.\eval(\vrftokenExpr_i)$ for all intermediate blocks, and computes proofs of correctness $\pi_h^{(i)}$:
    \begin{multline*}
        \pi_h^{(i)} = \textsf{SPK}\big\{(\skvrf) : \pkvrf = g^{\skvrf} \land\\ h_i = \vrf.\eval(\vrftokenExpr_i)\big\}()
    \end{multline*}
    Alice can use the VRF hashes to locate the cross-reference to Charlie in the intermediate blocks of Owen. The proofs $\pi_h^{(i)}$ confirm that she found the correct claims for Charlie's label $l$.
    \item Owen proves in zero-knowledge that $\claimcommit_i$ commits to one of the intermediate blocks $C_1, \ldots, C_t$ on Charlie's chain:
    \[\begin{aligned}
        \pi_\textsf{ref}^{(i)} = \textsf{SPK}\big\{& (r_i, x_i) : \claimcommit_i = \commit(r_i, x_i) \land \\
        &x_i \in \{ H_q(C_1), \ldots, H_q(C_t) \} \big\}().
    \end{aligned}\]
\end{enumerate}
Owen compiles all the tuples $(h_i, \pi_h^{(i)}, \pi_\textsf{ref}^{(i)})$ and sends them to Alice. Alice uses these tuples to check that each of the intermediate blocks belong the same chain of Charlie that she saw before. If Owen indeed equivocated as in the example, Alice can detect this, since the proof verification would have failed. A detailed description of this procedure is given in Figure~\ref{fig:consistency-proof}. 

\begin{figure}[t]
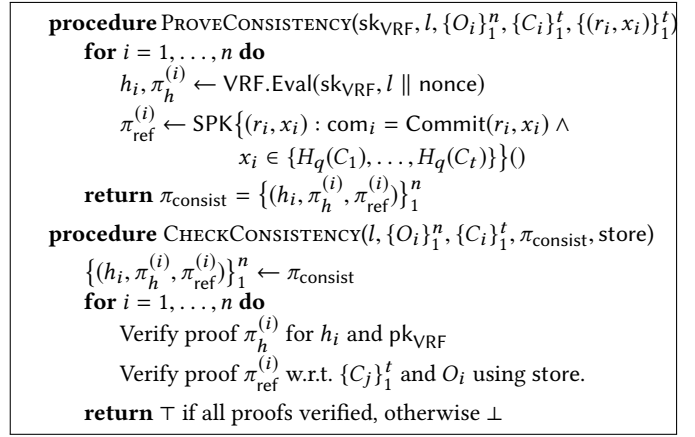

    \fbox{
    \begin{minipage}{\linewidth}
    
    \begin{algorithmic}
    \Procedure{ProveConsistency}{$\skvrf, l, \{O_i\}_{1}^{n}, \{C_i\}_{1}^{t}, \{(r_i, x_i)\}_1^{t}$}
    \For{$i = 1,\ldots,n$}
        \Let{$h_i, \pi_h^{(i)}$}{$\vrf.\eval(\vrftokenExpr)$}
        \Let{$\pi_{\textsf{ref}}^{(i)}$}{$\textrm{\textsf{SPK}}\big\{ (r_i, x_i) : \claimcommit_i = \commit(r_i, x_i) \land
        \newline\mbox{}\qquad\qquad\qquad\qquad
        x_i \in \{ H_q(C_1), \ldots, H_q(C_t) \} \big\}()$}
    \EndFor 
    \State \Return {$\pi_{\textsf{consist}} = \big\{(h_i, \pi_h^{(i)}, \pi_{\textsf{ref}}^{(i)})\big\}_1^{n}$}
    \EndProcedure
    
    \Procedure{CheckConsistency}{$l, \{O_i\}_{1}^{n}, \{C_i\}_{1}^{t}, \pi_{\textsf{consist}}, \textsf{store}$}
    \Let{$\big\{(h_i, \pi_h^{(i)}, \pi_{\textsf{ref}}^{(i)})\big\}_1^{n}$}{$\pi_{\textsf{consist}}$}
    \For{$i = 1,\ldots,n$}
        \State Verify proof $\pi_h^{(i)}$ for $h_i$ and $\pkvrf$
        \State Verify proof $\pi_{\textsf{ref}}^{(i)}$ w.r.t. $\{C_j\}_{1}^{t}$ and $O_i$ using \textsf{store}.
    \EndFor
    \State \Return {$\top$ if all proofs verified, otherwise $\bot$}
    \EndProcedure
    \end{algorithmic}
    \end{minipage}
    }
    \caption{Proving and verifying that blocks $O_i$ cross-reference the label $l$ to the correct chain $C_i$.}
    \label{fig:consistency-proof}
\end{figure}


    \section{Evaluation}
\label{sec:evaluation}



\descr{Experimental setup.} We implemented a prototype of \keychains in Python.\footnote{\url{https://github.com/claimchain/claimchain-core}} 
This implementation uses the petlib library~\cite{petlib} for elliptic curve cryptography operations, which internally relies on the OpenSSL C library. For the implementation of hash chains and unique-resolution Merkle trees we use the hippiehug\footnote{\url{https://github.com/gdanezis/rousseau-chain}} library, which is written in pure Python. Our implementation uses AES128 in GCM mode for symmetric encryption; ECDSA, ECDH, and other elliptic curve operations with a SECG curve over a 256 bit prime field (``secp256k1''); and SHA256 as the base hash function.
    
All the lookup keys on the claim map are truncated to 8 bytes, which makes collisions unlikely for up to $2^{32}$ entries in the map. The size of the per-block nonce is set to 16 bytes, and it is generated using the standard Linux \texttt{urandom} device. 

Our experiments are also publicly available and reproducible.\footnote{https://github.com/claimchain/claimchain-simulations} We extensively use Jupyter notebooks~\cite{Kluyver2016aa} and GNU parallel~\cite{Tange2011a}. We run the experiments on an Intel Core i7-7700 CPU @ 3.60GHz machine using CPython 3.5.2. 

\subsection{\keychain operations performance}
\label{sec:core-eval}
We now evaluate the performance of \keychains in terms of computation time and storage.

\descr{Timing.} We first measure the computation time for encoding and decoding claims and capabilities as described in Section~\ref{sec:claimdefinition}. We encode and decode 1000 claims and corresponding capabilities for random readers (i.e., encoded for a random DH public key). Each claim has a 32-byte random label and 512-byte random content. This reflects a realistic e-mail setting: 32-byte labels can accommodate e-mail addresses or their hash; and 512 bytes approximates the approx. 500-bytes block size in our experiments below.

\begin{table}[t]
\caption{\Keychain basic operations timing}
\label{table:core-eval}
\centering
\scalebox{0.85}{
\begin{tabular}{lccc}
{} &  mean (ms) &   std. (ms) \\
\hline
Label capab. lookup key computation  &      0.30 &  0.01 \\
Label capab. decoding                &      0.33 &  0.01 \\
Label capab. encoding                &      0.33 &  0.02 \\
Claim encoding [$\pi$ computation]   &      2.44 [2.38] &  0.05 [0.05]\\
Claim decoding [$\pi$ verification]  &      3.03 [2.96] &  0.05 [0.05] \\
\hline
\end{tabular}}
\end{table}

Table \ref{table:core-eval} reports our measurements. The time for encoding, decoding, and computing lookup keys for capabilities is under 0.33~ms. The time to encode and decode claims is around 3~ms, consisting mostly of the proof computation and verification time.

The most computationally expensive operation that \keychain owners perform is constructing the block map when a new block is created. The map is constructed using the \Call{BuildTree}{} procedure (see Algorithm~\ref{alg:build_tree} in Appendix~\ref{app:structures}). We measure the time to create a block map of $n$ claims with one capability each, i.e., readable by only one reader. We range $n$ from 100 to 5,000. For each case we construct a unique-resolution key-value Merkle tree with the encoded entries. Figure~\ref{fig:tree} (left) shows the average time required to build the tree across 20 experiments. Even for 5,000 claim-capability pairs the operation takes under 0.3 seconds. In reality, we expect users to have much fewer entries per block (in our simulation using the Enron dataset this number rarely exceeds 1,000).

\begin{figure}[t]
\centering
\begin{subfigure}[t]{.23\textwidth}
\centering
\includegraphics[width=\textwidth]{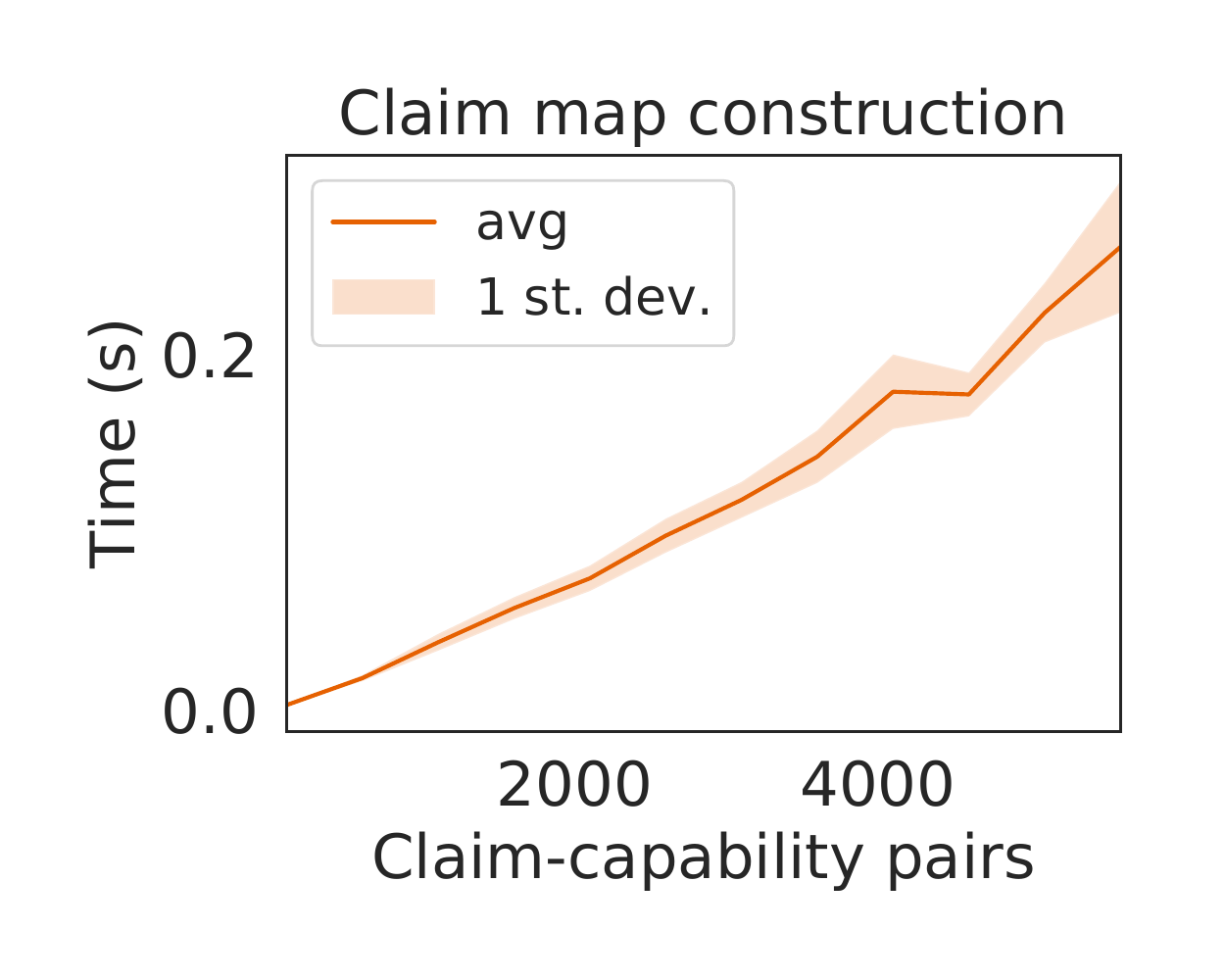}
\label{fig:tree-timing}
\end{subfigure}
\begin{subfigure}[t]{.23\textwidth}
\centering
\includegraphics[width=\textwidth]{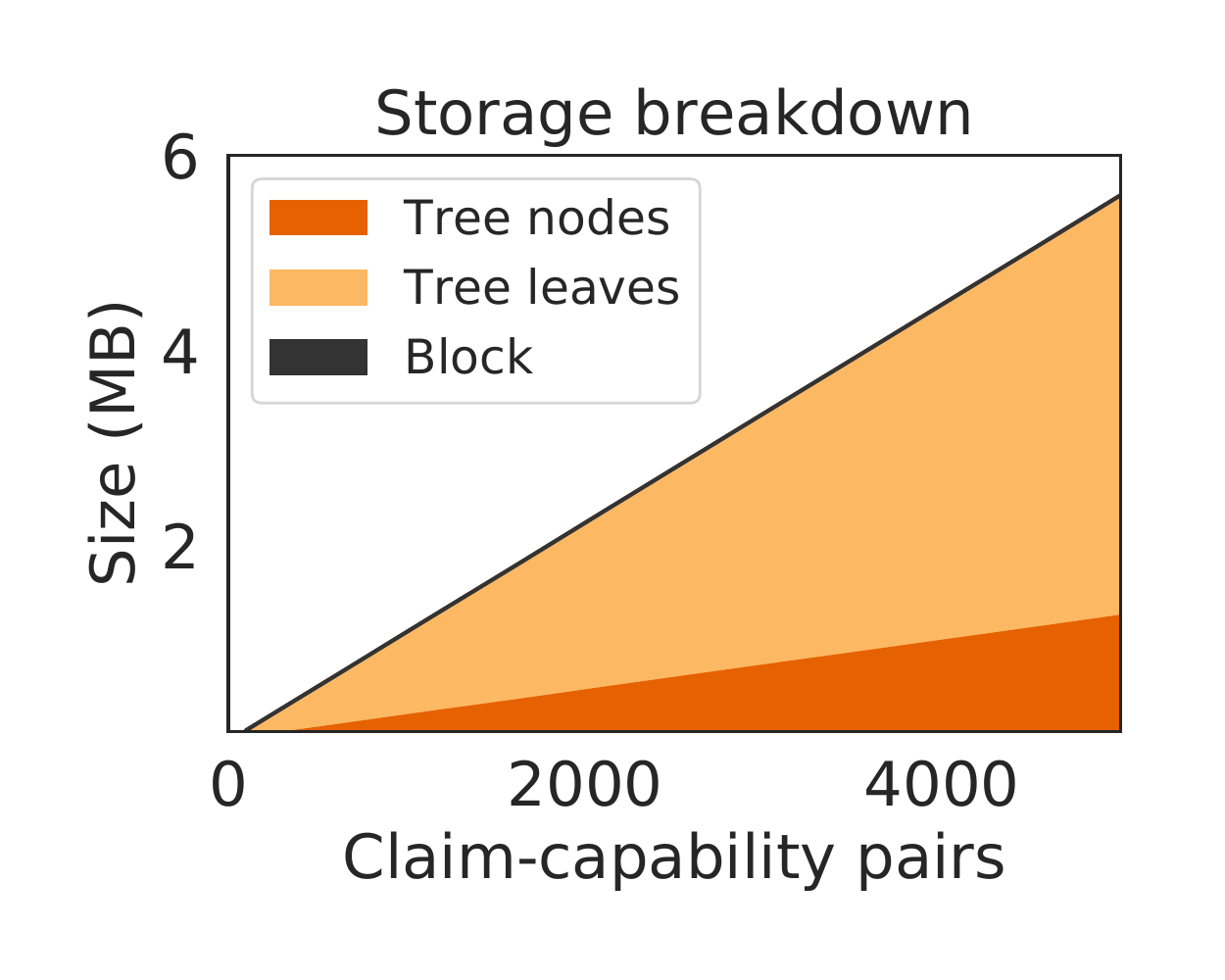}
\label{fig:storage-size}
\end{subfigure}
\vspace{-2em}
\caption{Total storage size and claim map construction time}
\label{fig:tree}
\end{figure}

Recall that along with the block, users send paths that prove the inclusion of relevant claims and capabilities in the block map tree. These are computed using the \Call{GetPath}{} procedure (see Algorithm~\ref{alg:query_tree} in Appendix~\ref{app:structures}). We measure the time to compute and verify a proof for a single entry, 
as well as the proof size in terms of number of tree nodes and bytes. We use the same setting as in the previous experiment. 
Unsurprisingly, the computation and verification time, and the proof size scale logarithmically with the number of items in the map. For 5,000 items, computation and verification take on average about 150 milliseconds, and the proof consists of on average 20 tree nodes and takes about 1.5 KB.

\descr{Storage.} We measure the size of a \keychain block, a block map tree, and values stored in the leaves of the tree (encrypted claims and capabilities). The size of the block map depends on the number of entries in the map and the size of claims. Figure~\ref{fig:tree} (right) shows the size breakdown depending on the number of items in the map. Note that the block itself only includes the root of the tree. Thus, the block size is constant (about 500 bytes), and can only grow if security parameters change (size of cryptographic public keys, or hash length increases), or additional data about the owner is added.

\descr{Inter-block equivocation detection.} The cost of proving consistency is dominated by the proof $\pi_\textsf{ref}^{(i)}$. Using a straightforward instantiation with `or' proofs, the prover and verifier must compute approximately $5t$ exponentiations to construct and verify  $\pi_\textsf{ref}^{(i)}$, where $t$ is the number of possible cross-referenced blocks. Therefore, a full consistency proof requires approximately $5nt$ exponentiations, where $n$ is the number of intermediate blocks on the owner's chain.
    \subsection{\keychain for in-band key distribution}
\label{sec:simulations}

In this section we study the use of \keychains for supporting in-band key distribution. We make use of the Enron dataset~\cite{KlimtY04,LeskovecLDM09} as a realistic test load to drive our experiments. It contains 500,000 e-mails of 147 Enron employees (230,000 after removing duplicates and non-readable e-mails). 

To simulate the use of \keychains, we loop through the e-mails in this dataset chronologically, updating \keychains of senders and receivers after each sent e-mail. For the experiments we run simulations of 10,000 consecutive e-mails from the full log of e-mails. We consider that at the beginning of a simulation the senders' \keychains are empty. During simulation they embed \keychain data in their messages as described in Section~\ref{sec:using}, and that upon receiving a message, users store all \keychain data locally.

We consider a scenario, called \emph{private setting}, in which senders selectively share the cross-references using \keychains. The rights to access claims are granted incrementally via `introductions'. Every time there is an e-mail with more than one recipient, all recipients earn a capability to access the cross-references about the correspondents in the sender's \keychain. Such capabilities are persistent over time. A user updates her chain when any of the recipients should receive a capability to a claim, and either the capability or the claim are not already in the \keychain.

As a baseline, we compare it to the \emph{public setting}, another scenario in which users gossip all the information available to them to their correspondents. They do not use ClaimChains since there is no need for access control. This scenario is close to the operation of the PGP Web of Trust if the users always were attaching their public key along with signatures on the keys of all their friends (cross-references). Since users share all the information they have available with everyone, the effectiveness of this setting reflects the fundamental limit for decentralized in-band key distribution.

To show how key propagation properties vary with the underlying social graph, we consider two groups of users from the dataset. First, only the 147 Enron employees, which represents a dense, well-connected, small social network with frequent messaging. Second, all users in the dataset, which is a sparsely connected network with many sporadic messages.

\begin{figure*}[t]
\centering
\includegraphics[width=0.49\textwidth]{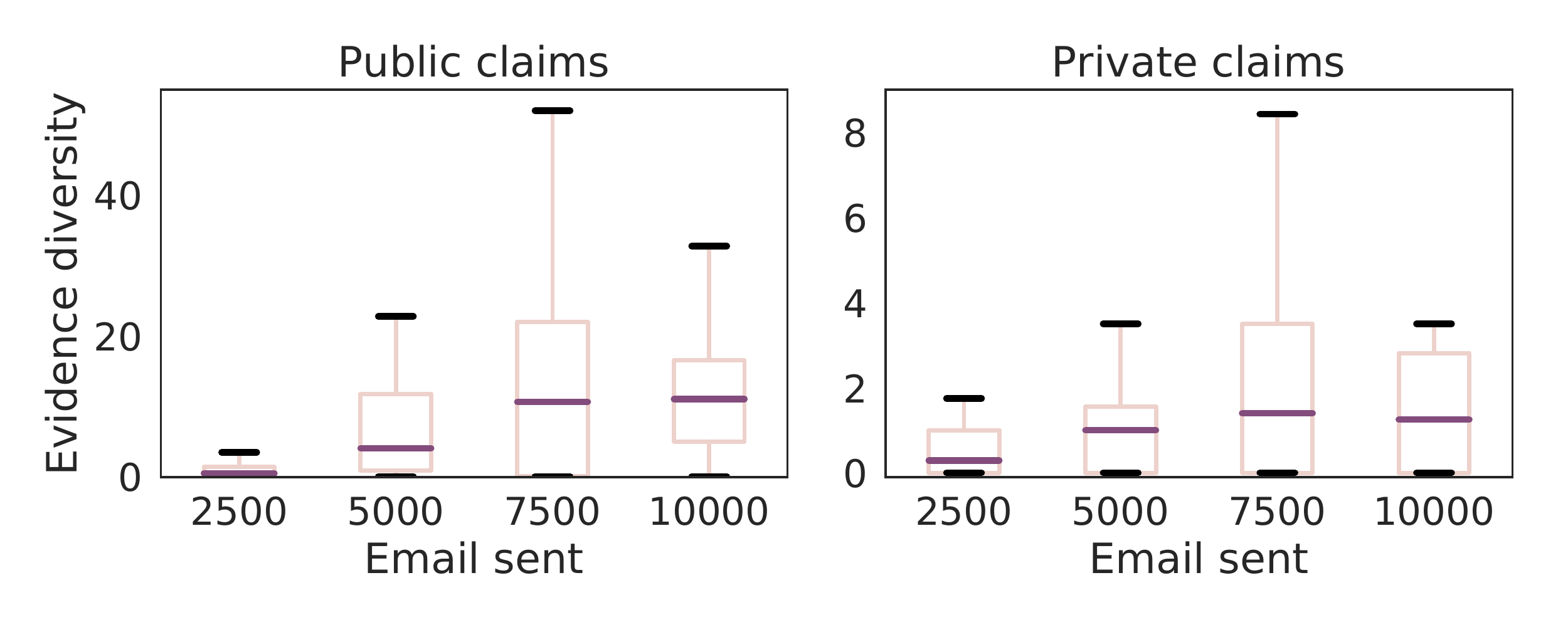} 
\includegraphics[width=0.49\textwidth]{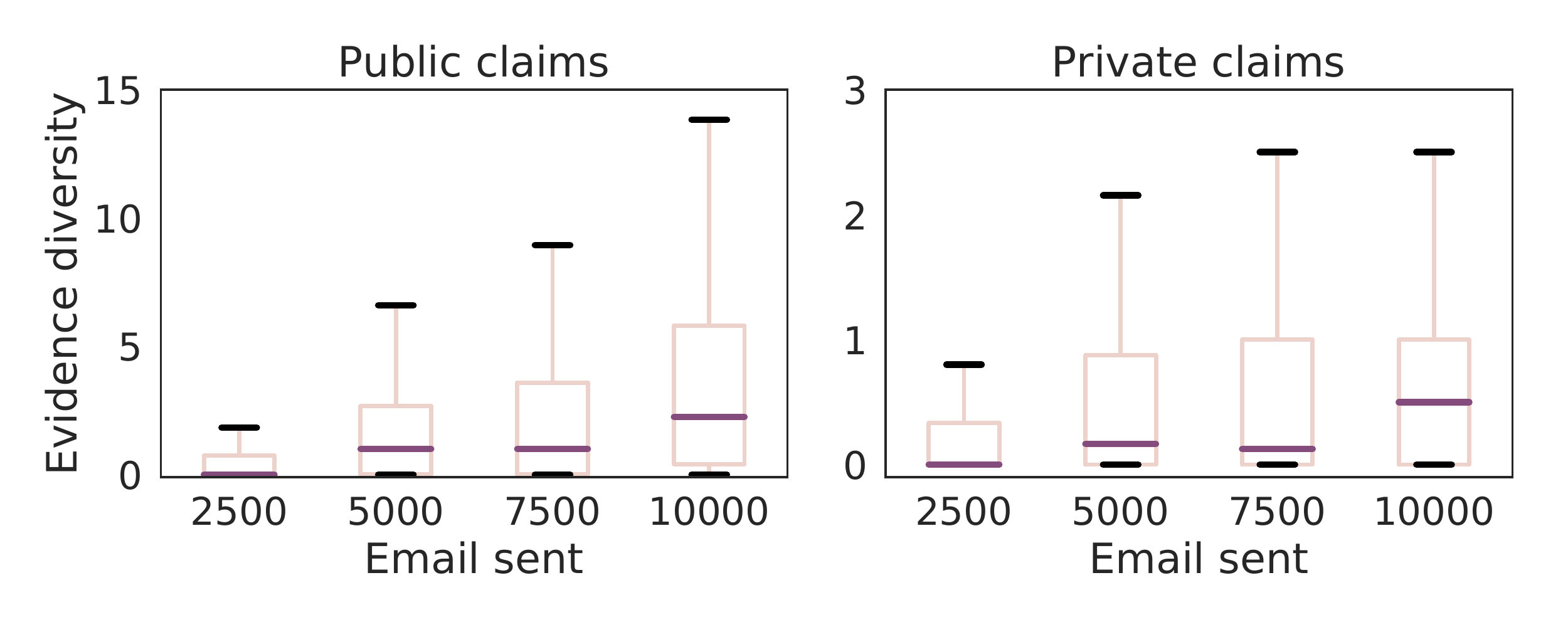}
\caption{Evidence diversity (within Enron, left, and all users, right).}
\label{fig:diversity}
\end{figure*}

\begin{figure*}
\includegraphics[width=0.32\textwidth]{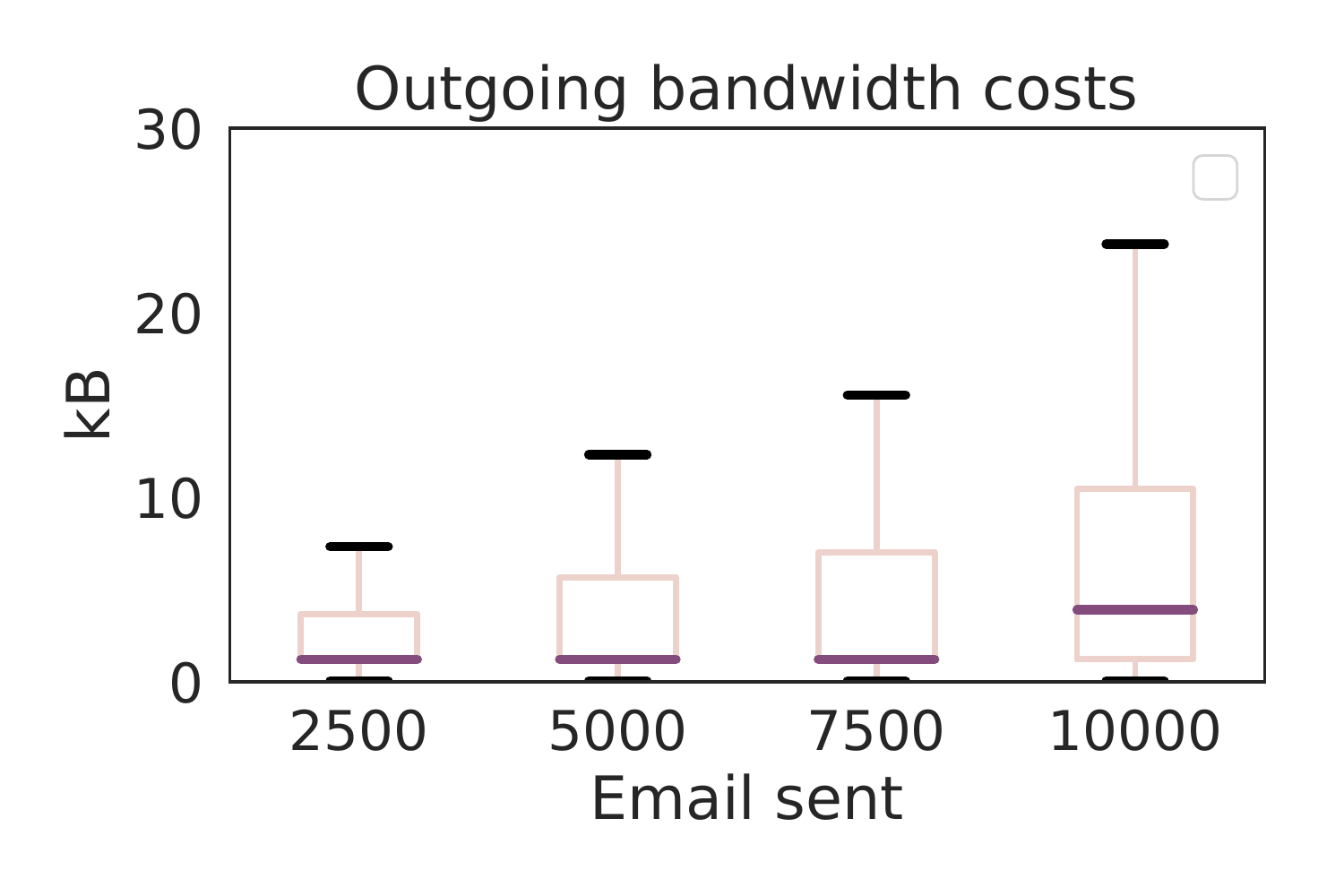} 
\includegraphics[width=0.32\textwidth]{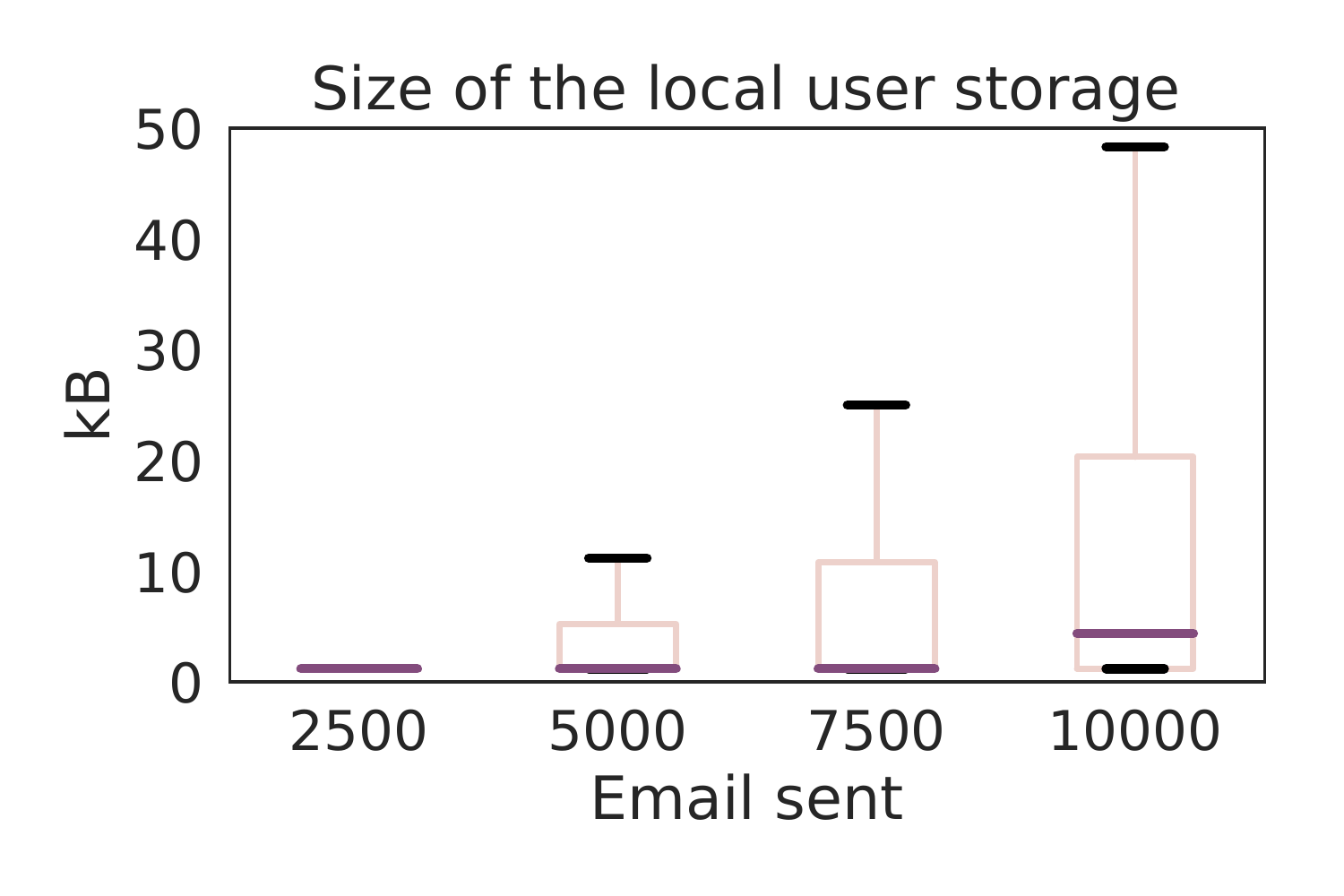} 
\includegraphics[width=0.32\textwidth]{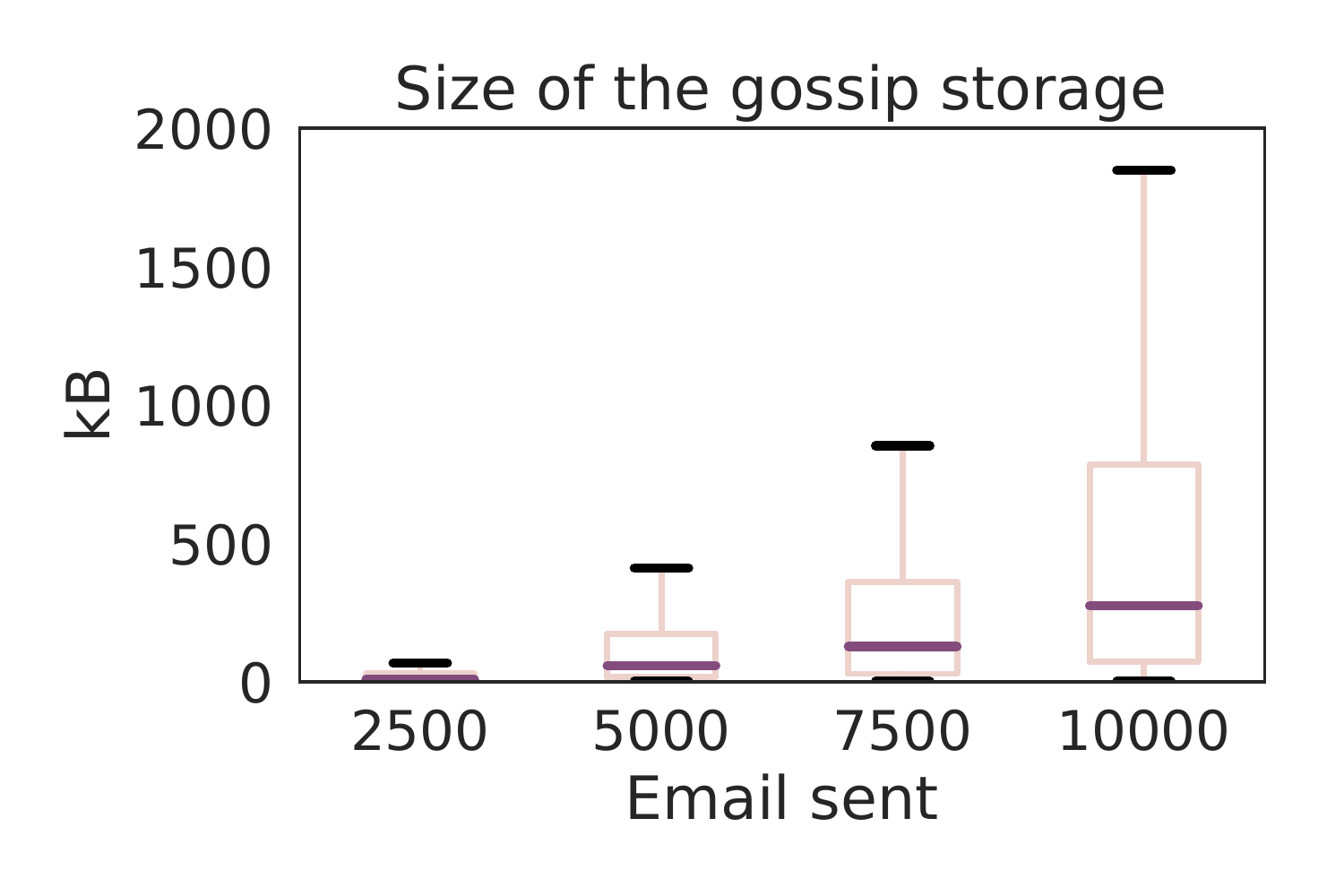} 
\caption{Storage and bandwidth measurements}
\label{fig:overhead}
\end{figure*}

\descr{Resilience to conflicts in views.}
\label{sec:resilience}
We first study how \keychains protect against attackers that aim at misleading users by reporting fake information about others. For each e-mail we record the amount of views that the sender has collected over time about each recipient. We call this quantity \emph{social evidence diversity}. When the evidence diversity is 1 for a given recipient, the sender only has one view---that of her own. When it is 10, that means the sender knows of 9 other people that have cross-referenced the recipient's chain. Intuitively, the higher is the diversity, the more users have to be corrupted by an adversary in order to convince the sender of a non-truthful \keychain state of the recipient.

Fig.~\ref{fig:diversity} illustrates the results in one batch of 10,000 messages. Results in other batches are similar. Unsurprisingly, in the public setting the amount of evidence is much higher than in the private setting, since much more information is exchanged. We also see that including all users reduces the mean diversity, since the social graph is sparse and many users do not have the opportunity to gather enough information about their correspondents.

Social evidence can also differ because over time views get outdated. In this case, the availability of the latest \keychain state of a user in our decentralized in-band setting is fundamentally limited by users' communication behavior. This is not a critical issue, since the construction of \keychains enables to differentiate `forks' resulting from an attack from simple chain updates. For example, in our private setting, social evidence includes views of the same chain at different time instants (on average in 1\% of cases among all users, and in 5\% within the Enron employees). In all these cases, \keychain enables to establish which view is the most up-to-date in a given social evidence set.

\descr{Storage and bandwidth costs.} To evaluate the overhead imposed by the use of \keychains for each user we record the size of the \keychain data being sent with each e-mail, and the required storage. We separately measure \emph{self-storage}---the space taken by users' own \keychain blocks and tree nodes, and \emph{gossip storage}---the space taken by information received from other users. 

The Figure~\ref{fig:overhead} shows that these costs rise over time as chains grow. We observe a large variation in growth caused by the variation in users' behaviour within the dataset. In the extreme, after 10,000 sent e-mails the required bandwidth per message is under 30~kB, the total size of the self-storage is under 50~kB, and the size of the gossip storage is under 2~MB. Note that we only report here the results in the private setting, since it employs \keychains.

\descr{Effectiveness of in-band key distribution.} Recall that the end goal of the public-key distribution is to enable end-to-end encrypted communication. Thus, we measure the effectiveness of the distribution as the percentage of encrypted e-mails, i.e., the fraction of times when a sender has received at least one \keychain block of \emph{all} the recipients, directly or through gossiping, and could find the key to encrypt the message to them. We simulate 10,000 e-mails starting at arbitrary points in the e-mail log and record whether senders have enough information to encrypt the e-mail. Figure~\ref{fig:enc_status} shows how this proportion changes over time. 

On the left we show the results within the set of Enron employees. At the beginning of the trace users learn many keys and increasingly more e-mails get encrypted. After some time the discovery rate decreases and there is a large variation in the proportion of encrypted e-mails. For this particular run, in the public setting, the overall proportion is 66\%, decreasing to 57\% in the private setting. 
We study the variance running the private setting simulations ten times at different points in the log. We observe that by the 2,000-th, 4,000-th, 6,000-th, and 8,000-th e-mail, on average 38\% ($\pm15$), 50\% ($\pm11$), 55\% ($\pm12$), 57\% ($\pm12$) of e-mails are encrypted\footnote{By $\pm x$ we denote 95\% Student t-distribution confidence interval in percentage points}. The overall percentage of encrypted e-mails across the ten runs is 59\% ($\pm9$). Note that these rates do not consider possible key gossiping that could have happened through users outside of the company.

On the right we consider all users. Compared to the previous measurements, the proportion of encrypted messages is significantly lower, overall average being 26\% in the public setting. This is because many senders and recipients are outside of the group of Enron employees and thus exchange fewer, or sporadic, e-mails. In the private setting the proportion decreases by only 4~p.p to 22\%. Across all ten runs, the overall proportion is 23\% ($\pm7$).
Note that these numbers are a lower bound on the number of encrypted e-mails. Since we do not observe the inbox of non-Enron users, we cannot establish the effectiveness of gossiping. Thus, the propagation may be better than shown in our measurements.

\begin{figure*}[t]
\includegraphics[width=0.49\textwidth]{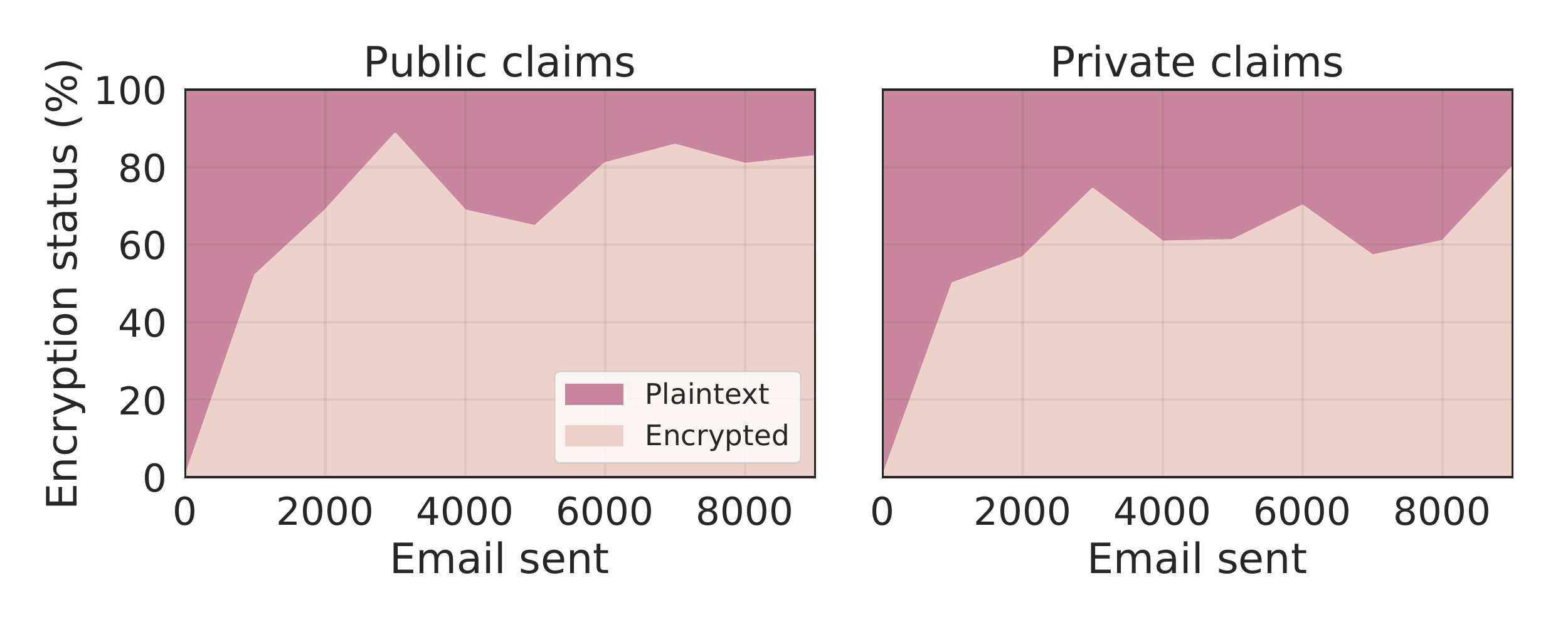}
\includegraphics[width=0.49\textwidth]{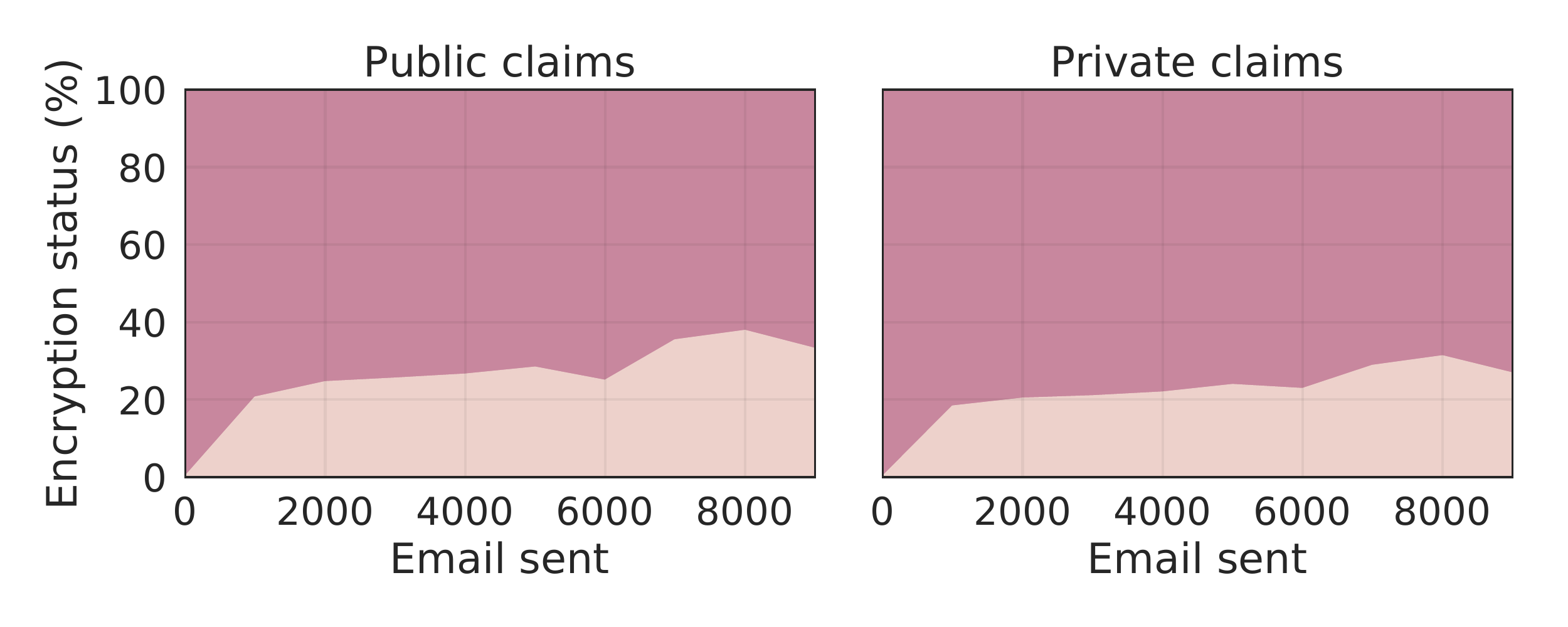}
\caption{Encryption status of e-mails (within Enron, left, and all users, right). Proportions are computed over groups of 1,000 consecutive e-mails.}
\label{fig:enc_status}
\end{figure*}

\descr{Takeaways.}
As expected, promiscuous public gossiping is more effective at propagating the key information than privacy-preserving sharing. Nonetheless, its advantage is relatively small. Sacrificing users' privacy does not provide a significant increase in the proportion of encrypted e-mails. This suggests that selective revealing of cross-references, enabled by \keychain cryptographic mechanisms, can offer a better trade-off between privacy and utility than the traditional Web of Trust-like sharing model.

On the other hand, the gain in privacy comes at a cost in resistance to active attacks. Even though gossiping in the private setting does not significantly decrease the proportion of encrypted e-mails, it does significantly deteriorate the resilience to malicious users. The evidence diversity in the private setting can be up to 10 times lower than in the public setting. This means that on average fewer users need to be compromised to propagate inauthentic keys.

The public setting in our simulations represents an upper bound for key propagation as in this setting users share all the information available to them. Our results corroborate that, independently of the use of \keychains to secure cryptographic material, in-band key distribution is unlikely to achieve full coverage, and furthermore the coverage is largely unpredictable. In a decentralized setting, key propagation cannot be more effective unless additional communication channels are employed. 

    \section{Comparison with existing systems}
\label{section:related_work}

In this section we compare \keychain to other key distribution systems targeting e-mail communications. Designs that facilitate secure key distribution in other contexts, like Certificate Transparency~\cite{rfc6962} for HTTPS connections, are out of scope.

We consider four approaches that represent existing deployed and academic solutions. First, we consider the PGP in-band decentralized approach, where users attach their public keys in outgoing e-mails. This corresponds to the current implementation of Autocrypt where there is no gossiping of contacts' keys. Second, we compare \keychain to systems that employ highly available centralized key servers taking the SKS Keyservers, a pool of synchronized servers that store PGP keys, as reference. Third, we consider solutions that, like \keychain, use high-integrity data structures (hash chains and Merkle trees) so as to hold providers accountable for the bindings they serve. In this space we consider CONIKS~\cite{MelaraBBFF15}, a federated variant, 
and Keybase\footnote{\url{https://keybase.io}}, a centralized design. 
Finally, we consider approaches such as Namecoin\footnote{\url{https://namecoin.org}} that bind identities to cryptographic wallet addresses on Proof-of-Work blockchains~\cite{nakamoto2008bitcoin}.

\begin{table*}[t]

\newcommand{\tddag}{\tnote{$\ddagger$}}
\newcommand{\tdag}{\tnote{$\dagger$}}

\centering
\caption{Comparison of key distribution systems from an end-user perspective. 
\textit{Social graph visibility:} who learns the user's social graph. \textit{Active attack detection:} whether active attacks by malicious providers, users, and network adversaries can be detected. \textit{Total key availability:} guarantee that recipients' current encryption keys are always available to senders. \\
}
\label{tab:comparison}
\begin{threeparttable}
\vspace{-2em}
%
%
%
\scalebox{0.85}{
\begin{tabular}{l|ll|lll|l}
                                    &   \textbf{In-band PGP}   & \textbf{SKS Keyserv.}& \textbf{CONIKS}    
                                    & \textbf{Keybase}          & \textbf{Namecoin}     & \textbf{\keychain} \\

\textbf{Social graph visibility}            &  E-mail provider & Public            & Provider                          & Public                            & Public                            & Authorized readers \\
\textbf{Active attack detection}            &   \xmark              & \xmark            & \cmark\tddag                      & \cmark\tddag                      & \cmark\tddag                      & \cmark \\
\textbf{Total key availability}             &   \xmark              & \cmark            & \cmark                            & \cmark                            & \cmark                            & \xmark                \\
\hline
\textbf{Sending bandwidth, $O(\cdot)$}      &   $s \cdot b$         & $s \cdot b^2$       & $s \cdot b \cdot \log(n)\tdag$    & $s \cdot b \cdot (b + \log(n))\tdag$  & $s \cdot b \cdot (b + \log(n))\tdag$      & $s \cdot b^2 \cdot \log(b)$ \\
\textbf{Receiving bandwidth, $O(\cdot)$}    &   $r \cdot b$         & $r \cdot b$       & $r \cdot \log(n)\tdag$            & $r \cdot ( b +  \log(n))\tdag$        & $r \cdot (b + \log(n))\tdag$              & $r \cdot b^2 \cdot \log(b)$ \\
\textbf{Local storage, $O(\cdot)$}          &   $b^2$               & $b^2$             & $b + \log(n)$                     & $b^2$                                 & $b^2$                                     & $r \cdot b^2 \cdot \log(b)$
\end{tabular}
}

\vspace{.5em}

\begin{tablenotes}
\item[$\dagger$] Without costs for auditing the transparency log / verifying blockchain history   \quad $\ddagger$ Requires global consensus on system's state
\end{tablenotes}

\end{threeparttable}
\end{table*}


We compare these systems in terms of functionality, computational and network costs. The result is summarized in Table~\ref{tab:comparison} where we use the following notation: $n$---number of users, $s$---number of sent messages, $r$---number of received messages, $b$---maximum total number of contacts of any user after $s$ sent and $r$ received messages.

\descr{PGP key distribution.}
OpenPGP~\cite{rfc4880} is a format for sharing PGP key material. It enables users to vouch for each others' keys, constructing a Web of Trust.
In-band PGP key distribution, whereby users' keys are embedded in messages, is vulnerable to attacks by malicious e-mail providers and network adversaries who, first, gain complete visibility of the users' Web of Trust, and second, can replace the attached keys and in that way compromise the confidentiality of users' communications. The Autocrypt implementation protects from servers launching man-in-the-middle attacks, but still reveals the contacts. Moreover, malicious users can equivocate by sharing different versions of others' keys with different readers. \keychain can be seen as an alternative format to OpenPGP where cryptographic mechanisms hide contacts' information. Furthermore, it ensures that all readers retrieve the same cross-reference for a contact, and makes past equivocation attempts detectable.

For in-band PGP key distribution users need to attach their keys and cross-references in their e-mails, hence requiring $O(s \cdot b)$ and $O(r \cdot b)$ outgoing and incoming bandwidth respectively. Locally, they need to store the keys of all their friends ($O(b)$), and their friends' cross-references ($O(b)$ for each friend), resulting in $O(b^2)$ cost. 
As shown in Section~\ref{sec:evaluation}, key propagation efficiency in this setting strongly depends on the social graph and on the users' behavior, and is not constant.


Centralized PGP PKI providers allow to achieve 100\% key availability, but introduce security and privacy concerns. The most widely deployed implementation, the SKS Keyservers, does not defend against malicious providers that serve fake keys, or that exploit key lookup requests to learn users' relationships. In addition, 
they accept unauthenticated plaintext HTTP requests, and thus network adversaries can also perform these attacks.

Centralized PKIs obviate the need for attaching key material in outgoing e-mails. However, before sending a message, a user needs to obtain the latest key of each of the recipients (which contains their cross-references, $O(b)$), requiring $O(s \cdot b^2)$ bandwidth. Upon receiving a message, users look up the sender's key (containing $O(b)$ cross-references), thus also resulting in $O(r \cdot b)$ network cost. As in the previous setting, users store the keys of their friends, including their cross-references, requiring a storage of $O(b^2)$.
%

\descr{Accountable key repositories.}
Recent approaches to key distribution rely on transparency logs~\cite{MelaraBBFF15} to prevent active attacks from malicious providers and network adversaries. 


In CONIKS~\cite{MelaraBBFF15}, providers maintain cryptographically signed hash chains that can be audited for serving the correct key bindings for their users.
Key lookup responses in CONIKS include a Merkle tree proof of inclusion (size $\log(n)$). Hence, considering that users make a lookup request for each of the recipients when sending, and for the sender when receiving, the bandwidth cost is $O(s \cdot b \cdot \log(n))$ for sending and $O(r \cdot \log(n))$ for receiving. Users store the keys of the friends, and the current inclusion proof for their own key, which requires storage space bound by $O(b + \log(n))$. We do not consider bandwidth costs related to verifying the history of the CONIKS provider or the consistency of users' keys.


Keybase maintains a global auditable hash chain that contains commits to users' individual \emph{sigchains} through a global Merkle tree. These sigchains are self-signed objects that evolve over time and include information about owner's keys, devices, online profiles, and friends. The global chain is occasionally cross-referenced into the Bitcoin's blockchain guaranteeing that it cannot be tampered with. Keybase users can create cross-references to their contacts by adding a snapshot of their contacts' state into their own sigchain. These cross-references are public, thus reveal user relationships.  

The latest state of a Keybase user includes a Merkle proof in the Keybase tree and the cross-references of her friends, which is bound by $O(b + \log(n))$. Hence, the bandwidth cost for sending messages, assuming that senders retrieve the receivers' latest keys from Keybase, is $O(s \cdot b \cdot (b + \log(n))$. In the same way, the cost of receiving e-mails, including looking up the latest version of the sender key, is $O(r \cdot (b + \log(n))$. These estimations do not consider the cost for validating the Keybase history, the consistency of user history, or the Keybase cross-references on the Bitcoin blockchain. Users store locally the keys of their friends and their respective cross-references, thus requiring $O(b^2)$ storage. 

Unlike \keychains, neither CONIKS nor Keybase provide mechanisms to enable social verification while protecting the social graph of users. Furthermore, CONIKS, a federated design, and Keybase, a centralized design, both put providers in a privileged position to observe their users' communications patterns. \keychains are designed to work in a decentralized setting that does not rely on the existence of a single entity with a global view of the system.

\descr{PKIs inspired by the Bitcoin blockchain.} 
Some PKI systems leverage permissionless decentralized cryptographic ledgers. Besides high availability and resistance to 
tampering attempts, ledgers also provide a global namespace and mechanisms for achieving global consensus. 
For instance, Namecoin uses a proof-of-work blockchain~\cite{nakamoto2008bitcoin} to implement key discovery for secure messaging apps. In Namecoin key material is encoded in the OpenPGP format, thus revealing the users' social graph. Moreover, the public transactions reveal information such as how social graphs evolve, or how pseudo-identities are linked to the same owner. \keychains protect this information by means of cryptographic access controls.
Furthermore, blockchain-based systems involve a fee for obtaining and managing key material.
Instead of maintaining a global state, in \keychains each user controls a personal chain. Thus, there is no need for mining blocks and thus there are no costs for key management operations. 

From the perspective of end-users, the bandwidth and storage costs in Namecoin are the same as in Keybase. In a similar way, we do not consider costs involved in maintaining consensus in a global proof-of-work blockchain, or in verifying its history.

\descr{In-band key distribution with \keychains.}
Finally, we outline the costs involved with using \keychains. Besides her own information, owners include a claim for each of her cross-references and capabilities, along with the corresponding proofs of inclusion in the block map. Therefore, the size of a \keychain block is bound by $O(b^2 \cdot \log(b))$. Consequently, given that all messages attach a \keychain block, the total sending and receiving bandwidth cost is $O(s \cdot b^2 \cdot \log(n))$ and $O(r \cdot b^2 \cdot \log(n))$, respectively. Users are expected to store all \keychain blocks they receive, requiring a total storage of $O(r \cdot b^2 \cdot \log(n)).$

    \section{Concluding remarks}
\label{sec:discussion}

In-band key distribution, as proposed by Autocrypt, is a promising direction towards achieving e-mail encryption without the collaboration from service providers. However, it suffers from security and privacy problems. To address these issues we introduced \Keychains, a construction that can be sent in-band with e-mails to provide high-integrity evidence of key-identity bindings. Its cryptographic access control enables users to selectively reveal their contacts, preserving their privacy, while preventing equivocation attacks in which different users are shown different bindings. 

We demonstrate that key propagation, and thus the ability to encrypt messages, is not affected much when using the privacy features of \keychains. However, users do obtain less evidence about other users' bindings, increasing the chances that wrong keys go unnoticed.
On the negative side, our study shows that the coverage achieved by in-band key distribution is partial at best. In our realistic simulations we could achieve a maximum of 66\% of e-mail encrypted, even within a well-connected social network. 

However, we note that the design of \keychains is not tied to decentralized storage and distribution. Their strong security and privacy properties permit to host the content-addressable storage in semi-trusted providers without relying on them to return correct values. Such deployment of \keychains would greatly improve availability of \keychain data. But, to obtain perfect privacy, such scheme requires integration with privacy-preserving storage access~\cite{ChorGKS95,ToledoDG16} to avoid leakage stemming from access patterns. 

Finally, \keychain or its component data structures can have applications to use cases beyond key distribution. The claim map data structure, for example, can be applied in similar settings when a verifiable dictionary with cryptographic access controls for its lookup keys is needed.

    \section*{Acknowledgments}
    This research is funded by NEXTLEAP project\footnote{\url{https://nextleap.eu}} within the European Union's Horizon 2020 Framework Program for Research and Innovation (H2020-ICT-2015, ICT-10-2015) under grant agreement 688722. We thank Holger Krekel, Azul, and Harry Halpin for their feedback and discussions.
    
    \bibliographystyle{ACM-Reference-Format}
    \bibliography{main.bib}


\begin{thebibliography}{24}


\ifx \showCODEN    \undefined \def \showCODEN     #1{\unskip}     \fi
\ifx \showDOI      \undefined \def \showDOI       #1{#1}\fi
\ifx \showISBNx    \undefined \def \showISBNx     #1{\unskip}     \fi
\ifx \showISBNxiii \undefined \def \showISBNxiii  #1{\unskip}     \fi
\ifx \showISSN     \undefined \def \showISSN      #1{\unskip}     \fi
\ifx \showLCCN     \undefined \def \showLCCN      #1{\unskip}     \fi
\ifx \shownote     \undefined \def \shownote      #1{#1}          \fi
\ifx \showarticletitle \undefined \def \showarticletitle #1{#1}   \fi
\ifx \showURL      \undefined \def \showURL       {\relax}        \fi
\providecommand\bibfield[2]{#2}
\providecommand\bibinfo[2]{#2}
\providecommand\natexlab[1]{#1}
\providecommand\showeprint[2][]{arXiv:#2}

\bibitem[\protect\citeauthoryear{Armasu}{Armasu}{2017}]%
        {googleabandon}
\bibfield{author}{\bibinfo{person}{Lucian Armasu}.}
  \bibinfo{year}{2017}\natexlab{}.
\newblock \bibinfo{title}{{Google Abandons `End-To-End' Email Encryption
  Project, Invites Community To Take It Over}}.
\newblock
  \bibinfo{howpublished}{\url{https://www.tomshardware.com/news/google-abandons-end-to-end-email-encryption,33745.html}}.
\newblock
\newblock
\shownote{Last accessed: \checkdate.}


\bibitem[\protect\citeauthoryear{Callas, Donnerhacke, Finney, Shaw, and
  Thayer}{Callas et~al\mbox{.}}{2007}]%
        {rfc4880}
\bibfield{author}{\bibinfo{person}{J. Callas}, \bibinfo{person}{L.
  Donnerhacke}, \bibinfo{person}{H. Finney}, \bibinfo{person}{D. Shaw}, {and}
  \bibinfo{person}{R. Thayer}.} \bibinfo{year}{2007}\natexlab{}.
\newblock \bibinfo{title}{{OpenPGP Message Format}}.
\newblock \bibinfo{howpublished}{RFC 4880}.
\newblock
\urldef\tempurl%
\url{https://rfc-editor.org/rfc/rfc4880.txt}
\showURL{%
\tempurl}


\bibitem[\protect\citeauthoryear{Chor, Goldreich, Kushilevitz, and Sudan}{Chor
  et~al\mbox{.}}{1995}]%
        {ChorGKS95}
\bibfield{author}{\bibinfo{person}{Benny Chor}, \bibinfo{person}{Oded
  Goldreich}, \bibinfo{person}{Eyal Kushilevitz}, {and} \bibinfo{person}{Madhu
  Sudan}.} \bibinfo{year}{1995}\natexlab{}.
\newblock \showarticletitle{Private Information Retrieval}. In
  \bibinfo{booktitle}{\emph{36th Annual Symposium on Foundations of Computer
  Science, Milwaukee, Wisconsin, USA, 23-25 October 1995}}.
  \bibinfo{publisher}{{IEEE} Computer Society}, \bibinfo{pages}{41--50}.
\newblock
\urldef\tempurl%
\url{https://doi.org/10.1109/SFCS.1995.492461}
\showDOI{\tempurl}


\bibitem[\protect\citeauthoryear{Crosby and Wallach}{Crosby and
  Wallach}{2011}]%
        {CrosbyW11}
\bibfield{author}{\bibinfo{person}{Scott~A. Crosby} {and}
  \bibinfo{person}{Dan~S. Wallach}.} \bibinfo{year}{2011}\natexlab{}.
\newblock \showarticletitle{Authenticated Dictionaries: Real-World Costs and
  Trade-Offs}.
\newblock \bibinfo{journal}{\emph{{ACM} Trans. Inf. Syst. Secur.}}
  \bibinfo{volume}{14}, \bibinfo{number}{2} (\bibinfo{year}{2011}),
  \bibinfo{pages}{17:1--17:30}.
\newblock
\urldef\tempurl%
\url{https://doi.org/10.1145/2019599.2019602}
\showDOI{\tempurl}


\bibitem[\protect\citeauthoryear{Dukhovni}{Dukhovni}{2014}]%
        {rfc7435}
\bibfield{author}{\bibinfo{person}{V. Dukhovni}.}
  \bibinfo{year}{2014}\natexlab{}.
\newblock \bibinfo{title}{{Opportunistic Security: Some Protection Most of the
  Time}}.
\newblock \bibinfo{howpublished}{RFC 7435}.
\newblock
\urldef\tempurl%
\url{https://rfc-editor.org/rfc/rfc7435.txt}
\showURL{%
\tempurl}


\bibitem[\protect\citeauthoryear{Fiat and Shamir}{Fiat and Shamir}{1986}]%
        {FiatS86}
\bibfield{author}{\bibinfo{person}{Amos Fiat} {and} \bibinfo{person}{Adi
  Shamir}.} \bibinfo{year}{1986}\natexlab{}.
\newblock \showarticletitle{How to Prove Yourself: Practical Solutions to
  Identification and Signature Problems}. In \bibinfo{booktitle}{\emph{Advances
  in Cryptology - {CRYPTO} '86, Santa Barbara, California, USA, 1986,
  Proceedings}} \emph{(\bibinfo{series}{Lecture Notes in Computer Science})},
  \bibfield{editor}{\bibinfo{person}{Andrew~M. Odlyzko}} (Ed.),
  Vol.~\bibinfo{volume}{263}. \bibinfo{publisher}{Springer},
  \bibinfo{pages}{186--194}.
\newblock
\urldef\tempurl%
\url{https://doi.org/10.1007/3-540-47721-7\_12}
\showDOI{\tempurl}


\bibitem[\protect\citeauthoryear{Franklin and Zhang}{Franklin and
  Zhang}{2013}]%
        {FranklinZ13}
\bibfield{author}{\bibinfo{person}{Matthew~K. Franklin} {and}
  \bibinfo{person}{Haibin Zhang}.} \bibinfo{year}{2013}\natexlab{}.
\newblock \showarticletitle{Unique Ring Signatures: {A} Practical
  Construction}. In \bibinfo{booktitle}{\emph{Financial Cryptography and Data
  Security - 17th International Conference, {FC} 2013, Okinawa, Japan, April
  1-5, 2013, Revised Selected Papers}} \emph{(\bibinfo{series}{Lecture Notes in
  Computer Science})}, \bibfield{editor}{\bibinfo{person}{Ahmad{-}Reza
  Sadeghi}} (Ed.), Vol.~\bibinfo{volume}{7859}. \bibinfo{publisher}{Springer},
  \bibinfo{pages}{162--170}.
\newblock
\showISBNx{978-3-642-39883-4}
\urldef\tempurl%
\url{https://doi.org/10.1007/978-3-642-39884-1\_13}
\showDOI{\tempurl}


\bibitem[\protect\citeauthoryear{Greenberg}{Greenberg}{2017}]%
        {googlevapor}
\bibfield{author}{\bibinfo{person}{Andy Greenberg}.}
  \bibinfo{year}{2017}\natexlab{}.
\newblock \bibinfo{title}{{After 3 Years, Why Gmail's End-to-End Encryption Is
  Still Vapor}}.
\newblock
  \bibinfo{howpublished}{\url{https://www.wired.com/2017/02/3-years-gmails-end-end-encryption-still-vapor/}}.
\newblock
\newblock
\shownote{Last accessed: \checkdate.}


\bibitem[\protect\citeauthoryear{Grubbs, Lu, and Ristenpart}{Grubbs
  et~al\mbox{.}}{2017}]%
        {GrubbsLR17}
\bibfield{author}{\bibinfo{person}{Paul Grubbs}, \bibinfo{person}{Jiahui Lu},
  {and} \bibinfo{person}{Thomas Ristenpart}.} \bibinfo{year}{2017}\natexlab{}.
\newblock \showarticletitle{Message Franking via Committing Authenticated
  Encryption}. In \bibinfo{booktitle}{\emph{Advances in Cryptology - {CRYPTO}
  2017 - 37th Annual International Cryptology Conference, Santa Barbara, CA,
  USA, August 20-24, 2017, Proceedings, Part {III}}}
  \emph{(\bibinfo{series}{Lecture Notes in Computer Science})},
  \bibfield{editor}{\bibinfo{person}{Jonathan Katz} {and}
  \bibinfo{person}{Hovav Shacham}} (Eds.), Vol.~\bibinfo{volume}{10403}.
  \bibinfo{publisher}{Springer}, \bibinfo{pages}{66--97}.
\newblock
\showISBNx{978-3-319-63696-2}
\urldef\tempurl%
\url{https://doi.org/10.1007/978-3-319-63697-9\_3}
\showDOI{\tempurl}


\bibitem[\protect\citeauthoryear{Klimt and Yang}{Klimt and Yang}{2004}]%
        {KlimtY04}
\bibfield{author}{\bibinfo{person}{Bryan Klimt} {and} \bibinfo{person}{Yiming
  Yang}.} \bibinfo{year}{2004}\natexlab{}.
\newblock \showarticletitle{Introducing the Enron Corpus}. In
  \bibinfo{booktitle}{\emph{{CEAS} 2004 - First Conference on Email and
  Anti-Spam, July 30-31, 2004, Mountain View, California, {USA}}}.
\newblock
\urldef\tempurl%
\url{http://www.ceas.cc/papers-2004/168.pdf}
\showURL{%
\tempurl}


\bibitem[\protect\citeauthoryear{Kluyver, Ragan{-}Kelley, P{\'{e}}rez, Granger,
  Bussonnier, Frederic, Kelley, Hamrick, Grout, Corlay, Ivanov, Avila, Abdalla,
  Willing, and et~al.}{Kluyver et~al\mbox{.}}{2016}]%
        {Kluyver2016aa}
\bibfield{author}{\bibinfo{person}{Thomas Kluyver}, \bibinfo{person}{Benjamin
  Ragan{-}Kelley}, \bibinfo{person}{Fernando P{\'{e}}rez},
  \bibinfo{person}{Brian~E. Granger}, \bibinfo{person}{Matthias Bussonnier},
  \bibinfo{person}{Jonathan Frederic}, \bibinfo{person}{Kyle Kelley},
  \bibinfo{person}{Jessica~B. Hamrick}, \bibinfo{person}{Jason Grout},
  \bibinfo{person}{Sylvain Corlay}, \bibinfo{person}{Paul Ivanov},
  \bibinfo{person}{Dami{\'{a}}n Avila}, \bibinfo{person}{Safia Abdalla},
  \bibinfo{person}{Carol Willing}, {and} \bibinfo{person}{et al.}}
  \bibinfo{year}{2016}\natexlab{}.
\newblock \showarticletitle{Jupyter Notebooks - a publishing format for
  reproducible computational workflows}. In
  \bibinfo{booktitle}{\emph{Positioning and Power in Academic Publishing:
  Players, Agents and Agendas, 20th International Conference on Electronic
  Publishing, G{\"{o}}ttingen, Germany, June 7-9, 2016.}},
  \bibfield{editor}{\bibinfo{person}{Fernando Loizides} {and}
  \bibinfo{person}{Birgit Schmidt}} (Eds.). \bibinfo{publisher}{{IOS} Press},
  \bibinfo{pages}{87--90}.
\newblock
\urldef\tempurl%
\url{https://doi.org/10.3233/978-1-61499-649-1-87}
\showDOI{\tempurl}


\bibitem[\protect\citeauthoryear{Laurie, A., and Kasper}{Laurie
  et~al\mbox{.}}{2013}]%
        {rfc6962}
\bibfield{author}{\bibinfo{person}{B. Laurie}, \bibinfo{person}{Langley A.},
  {and} \bibinfo{person}{E. Kasper}.} \bibinfo{year}{2013}\natexlab{}.
\newblock \bibinfo{title}{{Certificate transparency}}.
\newblock \bibinfo{howpublished}{RFC 6962}.
\newblock
\urldef\tempurl%
\url{https://rfc-editor.org/rfc/rfc6960.txt}
\showURL{%
\tempurl}


\bibitem[\protect\citeauthoryear{Lee}{Lee}{2017}]%
        {yahoogoogle}
\bibfield{author}{\bibinfo{person}{Wendy Lee}.}
  \bibinfo{year}{2017}\natexlab{}.
\newblock \bibinfo{title}{Yahoo, Google still working on end-to-end encryption
  for email}.
\newblock
  \bibinfo{howpublished}{\url{https://www.sfchronicle.com/business/article/Yahoo-Google-still-working-on-end-to-end-10872573.php}}.
\newblock
\newblock
\shownote{Last accessed: \checkdate.}


\bibitem[\protect\citeauthoryear{Leskovec, Lang, Dasgupta, and
  Mahoney}{Leskovec et~al\mbox{.}}{2009}]%
        {LeskovecLDM09}
\bibfield{author}{\bibinfo{person}{Jure Leskovec}, \bibinfo{person}{Kevin~J.
  Lang}, \bibinfo{person}{Anirban Dasgupta}, {and} \bibinfo{person}{Michael~W.
  Mahoney}.} \bibinfo{year}{2009}\natexlab{}.
\newblock \showarticletitle{Community Structure in Large Networks: Natural
  Cluster Sizes and the Absence of Large Well-Defined Clusters}.
\newblock \bibinfo{journal}{\emph{Internet Mathematics}} \bibinfo{volume}{6},
  \bibinfo{number}{1} (\bibinfo{year}{2009}), \bibinfo{pages}{29--123}.
\newblock
\urldef\tempurl%
\url{https://doi.org/10.1080/15427951.2009.10129177}
\showDOI{\tempurl}


\bibitem[\protect\citeauthoryear{Melara, Blankstein, Bonneau, Felten, and
  Freedman}{Melara et~al\mbox{.}}{2015}]%
        {MelaraBBFF15}
\bibfield{author}{\bibinfo{person}{Marcela~S. Melara}, \bibinfo{person}{Aaron
  Blankstein}, \bibinfo{person}{Joseph Bonneau}, \bibinfo{person}{Edward~W.
  Felten}, {and} \bibinfo{person}{Michael~J. Freedman}.}
  \bibinfo{year}{2015}\natexlab{}.
\newblock \showarticletitle{{CONIKS:} Bringing Key Transparency to End Users}.
  In \bibinfo{booktitle}{\emph{24th {USENIX} Security Symposium, {USENIX}
  Security 15, Washington, D.C., USA, August 12-14, 2015.}},
  \bibfield{editor}{\bibinfo{person}{Jaeyeon Jung} {and}
  \bibinfo{person}{Thorsten Holz}} (Eds.). \bibinfo{publisher}{{USENIX}
  Association}, \bibinfo{pages}{383--398}.
\newblock
\urldef\tempurl%
\url{https://www.usenix.org/conference/usenixsecurity15/technical-sessions/presentation/melara}
\showURL{%
\tempurl}


\bibitem[\protect\citeauthoryear{Micali, Rabin, and Vadhan}{Micali
  et~al\mbox{.}}{1999}]%
        {MicaliRV99}
\bibfield{author}{\bibinfo{person}{Silvio Micali}, \bibinfo{person}{Michael~O.
  Rabin}, {and} \bibinfo{person}{Salil~P. Vadhan}.}
  \bibinfo{year}{1999}\natexlab{}.
\newblock \showarticletitle{Verifiable Random Functions}. In
  \bibinfo{booktitle}{\emph{40th Annual Symposium on Foundations of Computer
  Science, {FOCS} '99, 17-18 October, 1999, New York, NY, {USA}}}.
  \bibinfo{publisher}{{IEEE} Computer Society}, \bibinfo{pages}{120--130}.
\newblock
\showISBNx{0-7695-0409-4}
\urldef\tempurl%
\url{https://doi.org/10.1109/SFFCS.1999.814584}
\showDOI{\tempurl}


\bibitem[\protect\citeauthoryear{Nakamoto}{Nakamoto}{2008}]%
        {nakamoto2008bitcoin}
\bibfield{author}{\bibinfo{person}{Satoshi Nakamoto}.}
  \bibinfo{year}{2008}\natexlab{}.
\newblock \bibinfo{title}{Bitcoin: A peer-to-peer electronic cash system}.
\newblock
\newblock


\bibitem[\protect\citeauthoryear{Papadopoulos, Wessels, Huque, Naor,
  Vcel{\'{a}}k, Reyzin, and Goldberg}{Papadopoulos et~al\mbox{.}}{2017}]%
        {PapadopoulosWHN17}
\bibfield{author}{\bibinfo{person}{Dimitrios Papadopoulos},
  \bibinfo{person}{Duane Wessels}, \bibinfo{person}{Shumon Huque},
  \bibinfo{person}{Moni Naor}, \bibinfo{person}{Jan Vcel{\'{a}}k},
  \bibinfo{person}{Leonid Reyzin}, {and} \bibinfo{person}{Sharon Goldberg}.}
  \bibinfo{year}{2017}\natexlab{}.
\newblock \showarticletitle{Can {NSEC5} be practical for {DNSSEC} deployments?}
\newblock \bibinfo{journal}{\emph{{IACR} Cryptology ePrint Archive}}
  \bibinfo{volume}{2017} (\bibinfo{year}{2017}), \bibinfo{pages}{99}.
\newblock
\urldef\tempurl%
\url{http://eprint.iacr.org/2017/099}
\showURL{%
\tempurl}


\bibitem[\protect\citeauthoryear{Pedersen}{Pedersen}{1991}]%
        {Pedersen91}
\bibfield{author}{\bibinfo{person}{Torben~P. Pedersen}.}
  \bibinfo{year}{1991}\natexlab{}.
\newblock \showarticletitle{Non-Interactive and Information-Theoretic Secure
  Verifiable Secret Sharing}. In \bibinfo{booktitle}{\emph{Advances in
  Cryptology - {CRYPTO} '91, 11th Annual International Cryptology Conference,
  Santa Barbara, California, USA, August 11-15, 1991, Proceedings}}
  \emph{(\bibinfo{series}{Lecture Notes in Computer Science})},
  Vol.~\bibinfo{volume}{576}. \bibinfo{publisher}{Springer},
  \bibinfo{pages}{129--140}.
\newblock
\urldef\tempurl%
\url{https://doi.org/10.1007/3-540-46766-1\_9}
\showDOI{\tempurl}


\bibitem[\protect\citeauthoryear{petlib}{petlib}{[n. d.]}]%
        {petlib}
petlib \bibinfo{year}{[n. d.]}\natexlab{}.
\newblock \bibinfo{title}{a {Python} library that implements a number of
  {Privacy Enhancing Technologies}}.
\newblock \bibinfo{howpublished}{\url{https://github.com/gdanezis/petlib}}.
\newblock
\newblock
\shownote{Last accessed: \checkdate.}


\bibitem[\protect\citeauthoryear{Schnorr}{Schnorr}{1991}]%
        {Schnorr91}
\bibfield{author}{\bibinfo{person}{Claus{-}Peter Schnorr}.}
  \bibinfo{year}{1991}\natexlab{}.
\newblock \showarticletitle{Efficient Signature Generation by Smart Cards}.
\newblock \bibinfo{journal}{\emph{J. Cryptology}} \bibinfo{volume}{4},
  \bibinfo{number}{3} (\bibinfo{year}{1991}), \bibinfo{pages}{161--174}.
\newblock
\urldef\tempurl%
\url{https://doi.org/10.1007/BF00196725}
\showDOI{\tempurl}


\bibitem[\protect\citeauthoryear{Tange}{Tange}{2011}]%
        {Tange2011a}
\bibfield{author}{\bibinfo{person}{O. Tange}.} \bibinfo{year}{2011}\natexlab{}.
\newblock \showarticletitle{GNU Parallel - The Command-Line Power Tool}.
\newblock \bibinfo{journal}{\emph{The USENIX Magazine}} \bibinfo{volume}{36},
  \bibinfo{number}{1} (\bibinfo{date}{Feb} \bibinfo{year}{2011}),
  \bibinfo{pages}{42--47}.
\newblock
\urldef\tempurl%
\url{http://www.gnu.org/s/parallel}
\showURL{%
\tempurl}


\bibitem[\protect\citeauthoryear{Toledo, Danezis, and Goldberg}{Toledo
  et~al\mbox{.}}{2016}]%
        {ToledoDG16}
\bibfield{author}{\bibinfo{person}{Raphael~R. Toledo}, \bibinfo{person}{George
  Danezis}, {and} \bibinfo{person}{Ian Goldberg}.}
  \bibinfo{year}{2016}\natexlab{}.
\newblock \showarticletitle{Lower-Cost {\(\epsilon\)}-Private Information
  Retrieval}.
\newblock \bibinfo{journal}{\emph{PoPETs}} \bibinfo{volume}{2016},
  \bibinfo{number}{4} (\bibinfo{year}{2016}), \bibinfo{pages}{184--201}.
\newblock
\urldef\tempurl%
\url{https://doi.org/10.1515/popets-2016-0035}
\showDOI{\tempurl}


\bibitem[\protect\citeauthoryear{Zimmermann}{Zimmermann}{1995}]%
        {Zimmerman95}
\bibfield{author}{\bibinfo{person}{Philip~R. Zimmermann}.}
  \bibinfo{year}{1995}\natexlab{}.
\newblock \bibinfo{booktitle}{\emph{The Official PGP User's Guide}}.
\newblock \bibinfo{publisher}{MIT Press}, \bibinfo{address}{Cambridge, MA,
  USA}.
\newblock
\showISBNx{0-262-74017-6}


\end{thebibliography}
    
    \appendix
    \section{Unique-resolution key-value Merkle tree}\label{app:structures}

Our unique-resolution key-value Merkle tree data structure is composed of two types of nodes:
\begin{align*}
\textsf{Internal} & = (\textsf{pivot},~\textsf{left}: H(\mathsf{Node}),~\textsf{right}: H(\mathsf{Node})) \\
\textsf{Leaf} & = (\textsf{key}, \textsf{value})
\end{align*}
We denote the root of a tree as \textsf{MTR}. Each \textsf{Internal} node contains a $\mathsf{pivot}$ string and the hashes of its two children. The invariant of the structure is that any nodes in the $\textsf{left}$ sub-tree will have pivots or leaf keys smaller than the parent pivot, and any nodes to the \textsf{right} sub-tree have pivots or leaf keys equal or larger than the parent pivot. As in a normal Merkle tree, the hash of the root node is a succinct authenticator committing to the full sub-tree (subject to the security of the hash function).

A proof of inclusion, or authentication proof, of a key-value pair in the tree involves disclosing the full resolution path of nodes from the root of the tree to the sought leaf. We show that such path is indeed a proof of inclusion, and, moreover, is unique in Section~\ref{sec:unique_resolution}.

\subsection{Algorithms}

\paragraph{Building the tree}

To build a tree from a set of key-value pairs $S = \{..., (k_i, v_i), ...\}$ we run the \textsc{BuildTree} procedure (Algorithm~\ref{alg:build_tree}) The procedure take as input a set of claims $S$ and a content-addressable \textsf{store}. It constructs the tree nodes and saves them to the \textsf{store}. Finally, it returns the hash of the root node of the resulting tree.

\begin{algorithm}[t]
\caption{Tree construction}
\label{alg:build_tree}
\begin{algorithmic}
\Procedure{BuildTree}{$S$, \textsf{store}}
    \If{$|S| = 1$}
        \Let{$\{(k, v)\}$}{$S$}
        \Let{\textsf{leaf}}{\textsf{Leaf}($k, H(v)$)}
        \State \Call{Put}{\textsf{store}, \textsf{leaf}}
        \State \Call{Put}{\textsf{store}, $v$} \Comment{{\footnotesize Put the value itself into the store}}
        \State \Return{$H(\textsf{leaf})$}
    \Else
        \State $(k^*, v^*) \gets_{\$} S$ \Comment{{\footnotesize Pick the pivot arbitrarily}}
        \Let{$(S^-, S^+)$}{\Call{Partition}{$k^*, S$}}
        \Let{\textsf{left}}{\Call{BuildTree}{$S^-, \textsf{store}$}}
        \Let{\textsf{right}}{\Call{BuildTree}{$S^+, \textsf{store}$}}
        \Let{\textsf{node}}{\textsf{Internal}($k^*, \textsf{left}, \textsf{right}$)}
        \State \textsf{store}.\Call{Put}{\textsf{node}} 
        \State \Return{$H(\textsf{node})$}
    \EndIf
\EndProcedure
\\
\Procedure{Partition}{$k^*, S$}
    \Let{$S^-, S^+$}{$\{~\}, \{~\}$}
    \For{$(k, v)$ in $S$}
        \If{$k < k^*$} \Comment{{\footnotesize Lexicographic comparison of strings}}
            \Let{$S^-$}{$S^- \cup \{(k, v)\}$}
        \Else{}
            \Let{$S^+$}{$S^+ \cup \{(k, v)\}$}
        \EndIf
    \EndFor
    \State \Return{$(S^-, S^+)$}
\EndProcedure
\end{algorithmic}
\end{algorithm}

\paragraph{Querying the tree}
The tree querying procedure \textsc{QueryTree} is described in Algorithm~\ref{alg:query_tree}. It takes as input the tree root $\textsf{MTR}$ and \textsf{store} that contains the tree nodes. The procedure traverses the tree starting from the root. For each intermediate node, the procedures follows a left or right sub-tree depending on the pivot field. It continues until it ends up in a leaf node. If the leaf node has the correct key, \textsc{QueryTree} returns the corresponding value, otherwise it returns $\bot$.

\begin{algorithm}[h]
\caption{Querying the tree}
\label{alg:query_tree}
\begin{algorithmic}
\Procedure{QueryTree}{\textsf{MTR}, $k$, \textsf{store}}
    \Let{$\pi$}{\Call{GetPath}{\textsf{MTR}, $k$, \textsf{store}}}
    \Let{$[..., \textsf{Leaf}(k', v)]$}{$\pi$}
    \If {$k' = k$}
        \State \Return {$\bot$}
    \Else 
        \State \Return {$v$}
    \EndIf
\EndProcedure
\\
\Procedure{GetPath}{$h$, $k$, \textsf{store}}
    \Let{\textsf{node}}{\textsf{store}.\Call{Get}{$h$}}
    \If{\textsf{node} is \textsf{Leaf}}
        \State \Return $[\textsf{node}]$
    \ElsIf{\textsf{node} is \textsf{Internal}(\textsf{pivot}, \textsf{left}, \textsf{right})}
        \If{$k < \textsf{pivot}$}
            \Let{$\pi$}{\Call{GetPath}{\textsf{left}, $k$, \textsf{store}}}
        \Else
            \Let{$\pi$}{\Call{GetPath}{\textsf{right}, $k$, \textsf{store}}}
        \EndIf
        \State \Return  $[\textsf{node}] + \pi$ \Comment{{\footnotesize Prepend the current node to the list $\pi$}}
    \EndIf
\EndProcedure
\end{algorithmic}
\end{algorithm}


\subsection{Unique resolution}\label{sec:unique_resolution}
For a given key, only one value can be stored in the tree. Any violation of this invariant will be detected when the tree is queried---thus the creator of the tree does not need to be trusted to enforce this invariant. More formally, for a given key $k$ it is only possible to successfully prove the inclusion of one leaf node in the tree with root $\textsf{MTR}$. We capture this notion in the \textsf{UniqRes} game in Experiment~\ref{exp:unique_resolution}. The following theorem states that no adversary can win this game.

\begin{theorem}\label{thm:unique_resolution}
For any probabilistic polynomial time adversary $\adv$ it holds that $\Pr_{\adv}[b = 1] = \negl(\lambda)$, where the bit $b \in \{0, 1\}$ is the output of the \textsf{UniqRes} game (Experiment~\ref{exp:unique_resolution}).
\begin{experiment}
\begin{algorithmic}
    \ExpTitle{UniqRes}
    \Let{$\mathsf{MTR}, k, \textsf{store}, \textsf{store}'$}{$\adv()$}
    \If {$\textsf{store} = \textsf{store}'$}
        \State \Return 0
    \EndIf
    \Let{$v$}{\Call{QueryTree}{$\mathsf{MTR}, k, \textsf{store}$}}
    \Let{$v'$}{\Call{QueryTree}{$\mathsf{MTR}, k, \textsf{store}'$}}
    \Let{$b$}{$v \neq v'$}
    \State \Return $b$.
\end{algorithmic}
\caption{Unique Resolution}
\label{exp:unique_resolution}
\end{experiment}
\end{theorem}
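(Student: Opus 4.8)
The claim is that the tree root $\mathsf{MTR}$, being a single hash value, cryptographically commits to a unique resolution path for every key $k$, so that no two stores can yield two different values $v \neq v'$ for the same $(\mathsf{MTR}, k)$. My plan is to reduce winning the \textsf{UniqRes} game to finding a collision in the hash function $H$. Suppose an adversary outputs $\mathsf{MTR}, k, \textsf{store}, \textsf{store}'$ such that $v = \Call{QueryTree}{\mathsf{MTR}, k, \textsf{store}} \neq \Call{QueryTree}{\mathsf{MTR}, k, \textsf{store}'} = v'$. First I would observe that \Call{QueryTree}{} calls \Call{GetPath}{} from the root $\mathsf{MTR}$, and that at each step the node fetched from the store must be the preimage of the current hash; i.e.\ if \Call{GetPath}{} follows the sequence of hashes $h_0 = \mathsf{MTR}, h_1, h_2, \ldots$, then the store returns nodes $N_0, N_1, \ldots$ with $H(N_j) = h_j$, and moreover (for internal nodes) $h_{j+1}$ is determined as either the \textsf{left} or \textsf{right} field of $N_j$ purely according to the comparison of $k$ against $N_j$'s pivot.

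The key step is a descent argument comparing the two resolution paths $\pi = [N_0, \ldots, N_\ell]$ (from \textsf{store}) and $\pi' = [N_0', \ldots, N_{\ell'}']$ (from \textsf{store}'). I claim by induction on depth $j$ that either $N_j = N_j'$ for all $j$ up to the shorter path's length (in which case the two leaves are literally the same node, so $v = v'$ up to a possible value-store collision — handled below), or there is a smallest index $j$ with $N_j \neq N_j'$. At that index $j$, we still have $h_j = h_j'$: this holds for $j = 0$ since both equal $\mathsf{MTR}$, and for $j > 0$ it follows from $N_{j-1} = N_{j-1}'$ together with the fact that the next hash is computed deterministically from $N_{j-1}$ and $k$ alone (same pivot, same key, same branch chosen, same child hash). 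Hence $H(N_j) = h_j = h_j' = H(N_j')$ with $N_j \neq N_j'$: a hash collision. If instead the paths agree node-for-node all the way to a common leaf $\textsf{Leaf}(k', h_v)$, then \Call{QueryTree}{} in both cases either rejects (if $k' \neq k$, but then $v = v' = \bot$, contradicting $v \neq v'$) or returns the value $v$ with $H(v) = h_v$ read from its own store; if those values differ we again have a collision in $H$ (two preimages of $h_v$). One subtlety: the two paths may have different lengths, but the shorter one must terminate at a leaf while the longer one, at that same depth, still has an internal node — yet they share the same hash, so again $H(\textsf{Internal}(\cdots)) = H(\textsf{Leaf}(\cdots))$, a collision between a tagged-internal and tagged-leaf encoding.

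\emph{The main obstacle} I anticipate is purely bookkeeping: carefully formalizing that each step of \Call{GetPath}{} is a \emph{deterministic} function of the fetched node and the fixed key $k$, so that equality of the previous node forces equality of the next hash (and thus lets the induction propagate the invariant $h_j = h_j'$ down to the first point of disagreement). Once that determinism is spelled out, the collision extraction is immediate. I would then conclude by the standard reduction: an adversary winning \textsf{UniqRes} with non-negligible probability yields, via the first-disagreement extraction, a PPT collision-finder for $H$ with non-negligible success probability, contradicting collision resistance; hence $\Pr_\adv[b = 1] = \negl(\lambda)$.
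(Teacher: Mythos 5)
Your proof is correct and takes essentially the same route as the paper's: a reduction to collision resistance of $H$ by tracking the first point at which the two resolution paths from the common root $\mathsf{MTR}$ diverge, using the determinism of the pivot-based traversal of \textsc{GetPath} to propagate equality of the node hashes down the shared prefix. You formalize the descent as an explicit induction and enumerate a couple of edge cases (paths of unequal length, and the fact that the leaf stores $H(v)$ so a second collision argument is needed for the value lookup) that the paper's proof treats more tersely, but the core argument is the same.
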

\begin{proof}
Assume $\adv$ wins the game. Then it is able to construct two stores such that there are two different valid paths:
\begin{align*}
\pi  &\gets \Call{GetPath}{\textsf{MTR}, k, \textsf{store}} \\
\pi' &\gets \Call{GetPath}{\textsf{MTR}, k, \textsf{store}'},
\end{align*}
that start with the same root $\mathsf{MTR}$, but end with different leaves containing $(k, v)$ and $(k, v')$ respectively.

First, assume one of the paths, w.l.o.g. $\pi$, consists of a single leaf node $t$ with $(k, v)$. Then the other path $\pi'$ can contain either another leaf $t'$ with $(k', v)$, or start with an internal node $t'$. This implies a hash collision, since $t \neq t'$, but $\textsf{MTR} = H(t) = H(t')$. By the collision resistance property of the cryptographic hash function $H$, this happens with negligible probability.

Now, assume that the paths have a common beginning. Let $t, t'$ be the first nodes along the paths that differ, and let $t^* = \textsf{Internal}(p^*, h^*_l, h^*_r)$ be their common parent. Then, there are four possible options:

\begin{enumerate}[label={(\alph*)}]
    \item Both $t$ and $t'$ are a left child of $t^*$. In this case, $H(t) = H(t') = h^*_l$. This implies a hash collision, which we assume to happen with negligible probability.
    \item Both $t$ and $t'$ are a right child of $t^*$. This is analogous to the previous case.
    \item The children $t$ and $t'$ are respectively the left child and the right child of $t^*$. This situation cannot happen, because \textsf{GetPath} decides which child to follow based on the value of the pivot $p^*$ and the lookup key $k$. Since the parent is common, the procedure will always choose either the left, or the right child.
    \item The children $t$ and $t'$ are respectively the right child and the left child of $t^*$. This is analogous to the previous case.
\end{enumerate}

Thus, the probability that $\adv$ wins the game, $\Pr_{\adv}[b = 1]$, equals the probability of a hash collision and is therefore negligible.
\end{proof}



\section{Security of the ClaimChain data structure}\label{app:privacy_proofs}

\subsection{Privacy}
Here we formally describe the privacy properties of \keychains.

\newcommand{\extendchainOracle}{\textsf{EC}}
\newcommand{\adduserOracle}{\textsf{AU}}

\begin{algorithm}[t]
\begin{algorithmic}
\LineComment{Add a new user}
\Procedure{$\adduserOracle$}{\id}
    \Let{$(\sksig^\id, \pksig^\id)$}{$\sig.\keygen(1^\lambda)$}
    \Let{$(\skdh^\id, \pkdh^\id)$}{$\mydh.\keygen(1^\lambda)$}
    \Let{$(\skvrf^\id, \pkvrf^\id)$}{$\vrf.\keygen(1^\lambda)$}
    \Let{$\textsf{keys}^\id$}{$(\sk_*^\id, \pk_*^\id)$}
    
    \Let{$(\sksig'^\id, ...)$}{$\textsf{keys}^\id$} \Comment{{\footnotesize Separately record the signing key}}
\EndProcedure
\\
\LineComment{Extend the chain of an existing user}
\Procedure{$\extendchainOracle$}{$\id, \textsf{data}, \textsf{claims}, \acm, \textsf{store}$}
    \If {user $\id\xspace$ does not exist} \Return $\bot$ \EndIf
    \Let{$\ptr^\id$}{\textsc{ExtendChain}(}
    \State \hspace{5em} $\textsf{data}, \textsf{claims}, \acm, \textsf{keys}^\id \cup \sksig'^\id, \ptr^\id, \textsf{store})$
    \State \Return $\ptr^\id$
\EndProcedure
\end{algorithmic}
\caption{Add user and extend chain oracles}
\label{alg:build_block_oracle}
\end{algorithm}

\descr{Claim privacy.} The adversary cannot learn anything about the content claims for which it does not have the corresponding capabilities.

We formalize this in Experiment~\ref{exp:claimPriv} using an indistinguishability game. The game models that the adversary cannot distinguish between a claim containing one of two equal-length messages of its choice. The experiment is executed by a challenger that plays a game with the adversary $\adv$.

The game starts with creating a user that represents an honest reader, and another user that represents the challenger. We then provide the adversary with an oracle access that allows it to create users and request them to extend their chains with adversary-supplied claims and access control sets (see Algorithm~\ref{alg:build_block_oracle}). Moreover, the adversary is allowed to modify \textsf{store}.

Eventually, the adversary outputs two claims $(l_0, m_0)$ and $(l_1, m_1)$. The challenger flips a random coin $b$, and constructs a challenge block containing claim $(l_b, m_b)$, readable by the honest reader, but not by the adversary. The adversary then has to guess which of the two challenge claims were included in the challenge block. It may make further oracle queries.

Note that this definition implies that the adversary cannot learn anything about the claim neither from the claim encoding itself, not from any of the capabilities. Additionally, the adversary could have access to the claim in the past, but not in the challenge block.

The proof of knowledge $\pi$ in the claim encoding $c$ depends on the claim key $\claimkey$ and other public values, making it difficult to prove directly that the adversary cannot learn anything about the bit $b$. Therefore, in one of the steps we replace this proof $\pi$ with a completely random proof. The following lemma states that we may do so.
\begin{lemma}\label{lem:proof-random}
To any distinguisher that does not know the value $\proofkey \in \{0,1\}^{2\secparameter}$, the proof $\pi$ in \textsc{EncClaim} is indistinguishable from a randomly drawn proof in the random oracle model for $H_q$.
\end{lemma}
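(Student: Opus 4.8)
The plan is to observe that $\pi$ is nothing more than the Fiat--Shamir collapse of a conjunction of standard linear sigma protocols for discrete-logarithm relations that share the witness $(\skvrf, r)$ --- a Schnorr proof for $\pkvrf = g^{\skvrf}$, a Chaum--Pedersen-style equality proof that this same $\skvrf$ is the exponent in $h = \vrf.\eval(\skvrf, \cdot)$, and a Schnorr-style proof of knowledge of the opening $r$ with $\claimcommit = \commit(r, H_q(m))$ --- in which the challenge is obtained as $H_q$ applied to the public statement, the prover's first-round commitments, and the nonce string $\proofkey$. By a \emph{randomly drawn proof} I understand the output of the canonical honest-verifier zero-knowledge simulator for this compound protocol: sample the challenge $e \sample \Zq$ and all the responses uniformly at random and derive the first-round commitments from the (linear) verification equations. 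Written as a pair $(\mathsf{challenge}, \mathsf{responses})$, as is standard for a non-interactive signature proof of knowledge, such a random proof coincides exactly with this simulator's output; it is from this distribution that I will show the real $\pi$ produced by \textsc{EncClaim} is statistically indistinguishable, for any distinguisher that does not hold $\proofkey$, in the random oracle model for $H_q$.

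First I would isolate the one algebraic fact the argument rests on: for each conjunct, once the first-round commitments are formed from uniform prover randomness the responses are an affine function of the challenge, and for any fixed challenge the responses range uniformly over their space as that randomness varies. Consequently the set of accepting transcripts admits two product parametrisations --- ``pick commitments and challenge freely, responses determined'' and ``pick responses and challenge freely, commitments determined'' --- and both the honest-prover transcript distribution and the simulator's distribution are, in each case, the uniform distribution on this set. Hence, conditioned on the event that the challenge $e$ is a fresh uniform element of $\Zq$ independent of everything else in the distinguisher's view, the real $\pi$ and the random proof are identically distributed.

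The only real content is the random-oracle bookkeeping that justifies this conditioning. In the real execution \textsc{EncClaim} evaluates $H_q$ once at a point whose trailing component is $\proofkey$, in order to obtain $e$. Since $\proofkey$ is drawn uniformly from $\{0,1\}^{2\secparameter}$ and, by hypothesis, is unknown to the distinguisher, the probability that the distinguisher ever queries $H_q$ on an input whose trailing component equals the true $\proofkey$ is at most $q_{H_q}/2^{2\secparameter}$, where $q_{H_q}$ bounds its oracle queries; this is negligible in $\secparameter$. Conditioned on the complement of this event, $e = H_q(\cdots \parallel \proofkey)$ is, in the distinguisher's view, a uniform element of $\Zq$ independent of the commitments, the statement, and any other outputs of \textsc{EncClaim}, so the hypothesis of the previous paragraph is met and the two proof distributions coincide exactly. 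The distinguishing advantage is therefore at most $q_{H_q}/2^{2\secparameter} = \negl(\secparameter)$, which is the claim. When this lemma is later invoked inside a hybrid, the same bookkeeping lets the reduction itself sample $e$ and the responses and program $H_q(\cdots \parallel \proofkey) := e$ without ever touching $\proofkey$, and I would make that programming step explicit there.

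I expect no deep obstacle here: the effort is in getting the definitions straight --- fixing precisely what ``randomly drawn proof'' means so that the statement is an equality conditioned on a negligible-probability bad event rather than a merely computational claim --- and in handling cleanly the single random-oracle query that \textsc{EncClaim} itself makes on the hidden point $\proofkey$, both in the stand-alone lemma and, afterwards, in any reduction that must reproduce it by programming $H_q$.
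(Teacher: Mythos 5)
Your proposal is correct and takes essentially the same approach as the paper: both hinge on the observation that a distinguisher ignorant of $\proofkey$ can never query $H_q$ at the point the prover used, so the Fiat--Shamir challenge is a fresh uniform element of $\Zq$ in its view, and the Schnorr responses are then also uniform, making the $(\mathsf{challenge},\mathsf{responses})$ pair identically distributed to a randomly drawn one. You are somewhat more explicit than the paper in pinning down the $q_{H_q}/2^{2\secparameter}$ advantage bound, treating the full conjunction rather than reducing w.l.o.g.\ to one conjunct, and flagging the random-oracle programming needed when the lemma is invoked in the later hybrid, but the substance is the same.
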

\begin{proof}
Without loss of generality, we focus on a simpler proof with only a single conjunct, writing $m$ for $l \parallel \nonce$:
\begin{equation*}
    \pi \gets \textsf{\textsf{SPK}}\{(\skvrf) : \pkvrf = g^{\skvrf} \land h = \vrf.\eval(\skvrf, m) \}(\proofkey).
\end{equation*}
Which abbreviates the following proof:
\begin{equation*}
    \pi \gets \textsf{\textsf{SPK}}\{ (\skvrf) : \pkvrf = g^{\skvrf} \land\\
    h = H_\group(m)^{\skvrf}\}(\proofkey).
\end{equation*}
\newcommand{\skrand}{r_{\textsf{sk}}}
\newcommand{\skresp}{s_{\textsf{sk}}}
To construct this proof, pick a randomizer $\skrand \gets \Zq$, and compute
\begin{align*}
    R_{\textsf{pk}} &= g^{\skrand} \\
    R_h &= H_\group(m)^{\skrand} \\
    c &= H_{\grouporder}(g \parallel H_\group(m) \parallel \pkvrf \parallel h \parallel R_{\pk} \parallel R_h \parallel \proofkey) \\
    \skresp &= \skrand + c \cdot \skvrf.
\end{align*}
The proof is then given by $(c, \skresp).$ To verify the proof, compute
\begin{align*}
    R_{\textsf{pk}}' &= g^{\skresp} \pkvrf^{-c} \\
    R_{h}' &= H_\group(m)^{\skresp} h^{-c},
\end{align*}
and verify that $c$ equals $H_q(g \parallel H_\group(m) \parallel \pkvrf \parallel h \parallel R_{\pk}' \parallel R'_h \parallel \proofkey)$.

Suppose that the adversary does not know $\proofkey$. To randomly generate the proof, draw $(c', \skresp')\sample~\Zq^2$ at random. Since the adversary does not know $\proofkey$ it can never query the random oracle $H_q$ with the correct value for $\proofkey$, therefore it cannot distinguish the fake proof $(c', \skresp')$ from a real proof $(c, \skresp).$
\end{proof}

\begin{theorem}[Claim privacy]\label{thm:claim_privacy}
For any probabilistic polynomial time adversary $\adv$ it holds that $Pr[b = 1] \leq \tfrac{1}{2} + \negl(\lambda),$ where $b \in \{0, 1\}$ is the result of \textsc{ClaimPriv} game (Experiment~\ref{exp:claimPriv}) run with $\adv$. 

\begin{experiment}[t]
\setstretch{1.2}
\begin{algorithmic}
    \ExpTitle{ClaimPriv}
    \LineSep{Setup}
    \State {\Call{$\adduserOracle$}{`reader'}} \Comment{{\footnotesize Initialize reader's chain}}
    \State {\Call{$\adduserOracle$}{`challenger'}} \Comment{{\footnotesize Initialize challenger's chain}}
    \LineSep{Content to include in the challenge block}
    \State $(l_0, m_0), (l_1, m_1), \textsf{data}, \textsf{claims}), \acm, \textsf{store}) \gets $
    \State \hspace{6.5em} $\gets \adv^{\extendchainOracle(\cdot),  \adduserOracle(\cdot)}\hspace{-.1em}\left(\pkdh^{\text{`reader'}}\right )$
    \If {$l_0$ or $l_1$ in $\acm$ or $|m_0| \neq |m_1|$}
        \Return 0
    \EndIf
    \LineSep{Challenge block}
    \Sample{$b$}{$\{0, 1\}$}
    \Let{$\textsf{claims}'$}{$\textsf{claims} \cup \{(l_b, m_b)\}$}
    \Let{$\acm'$}{$\acm \cup \{(\pkdh^{\text{`reader'}}, l_b)\}$} \Comment{{\footnotesize Give the reader the access to $l_b$}}
    \Let{$\ptr_C$}{\Call{$\extendchainOracle$}{`challenger', \textsf{data}, $\textsf{claims}'$, $\acm'$, \textsf{store}}}
    \LineSep{Response}
    \Let{$\hat{b}$}{$\adv^{\extendchainOracle(\cdot),\adduserOracle(\cdot)}\left( \ptr_C \right)$}
    \State \Return $\hat{b} = b$
\end{algorithmic}
\caption{Claim privacy}
\label{exp:claimPriv}
\end{experiment}
\end{theorem}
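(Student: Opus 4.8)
The proof proceeds by a sequence of game hops, starting from the real \textsc{ClaimPriv} game $G_0$ and ending at a game $G_6$ in which the adversary's entire transcript is statistically independent of the challenge bit $b$, so that $\Pr[\hat b = b] = \tfrac12$ exactly; summing the negligible gaps between consecutive games then gives $\Pr[b=1] \le \tfrac12 + \negl(\lambda)$. The key observation is that the only object handed to $\adv$ that can depend on $b$ is the content of the challenge block pointed to by $\ptr_C$: concretely, the encoded claim $c = \enc(\claimkey, \pi \parallel m_b) \parallel \claimcommit$ stored under lookup key $i = H_1(h)$, together with the one capability $\capab = \enc(k_\capab, h \parallel \claimkey \parallel \proofkey)$ addressed to the honest reader under $i_\capab = H_3(s \parallel l_b \parallel \nonce)$, where $s = \derivesecret(\skdh^{\text{challenger}}, \pkdh^{\text{reader}})$. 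Since the game aborts when $l_0$ or $l_1 \in \acm$ (and assuming the usual well-formedness condition that the labels in \textsf{claims} are distinct and differ from $l_0,l_1$), this reader-capability is the \emph{only} capability in the challenge block that resolves to the challenge claim, and $\adv$ holds neither $\skdh^{\text{reader}}$ nor $\skdh^{\text{challenger}}$.

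\textbf{Hop $G_0 \to G_1$ (the main obstacle).} Replace $s$ by a uniformly random string, reducing to DDH (modelling the key-derivation as a random oracle, or as a secure extractor applied to the DH output). The subtlety I expect to be hardest is that the \emph{same} secret $s$ can be re-triggered by $\adv$ through the $\extendchainOracle$ oracle once it reads $\pkdh^{\text{challenger}}$ out of the challenge block: $\adv$ may ask \emph{either} the reader \emph{or} the challenger to build a further block containing a capability for the other party, all of which must use $s$. The reduction therefore embeds the DDH instance $(g^a,g^b,T)$ as (reader's DH key, challenger's DH key, $s$) and answers \emph{all} DH operations consistently: any capability the reader or challenger creates for an adversarially introduced user $u$ with key $g^{c_u}$ is computed as $(g^a)^{c_u}$ or $(g^b)^{c_u}$ using the exponent $c_u$ the reduction itself picked, while every reader--challenger interaction uses $T$. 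When $T = g^{ab}$ this is $G_0$, when $T$ is uniform it is $G_1$.

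\textbf{Hops $G_1 \to G_3$ (neutralizing the capability and the claim index).} In $G_1 \to G_2$, since $s$ is now uniform and enters $\adv$'s view only via $i_\capab = H_3(s\parallel l_b\parallel\nonce)$ and $k_\capab = H_4(s\parallel l_b\parallel\nonce)$, program $i_\capab$ to a fresh random string ($\adv$ never queries the prefix $s$) and, by IND-CPA security of $\enc$ under the now-hidden key $k_\capab$, replace $\capab$ by an encryption of $0^{|h\parallel\claimkey\parallel\proofkey|}$; after this hop $h$, $\claimkey$, $\proofkey$ are information-theoretically absent from $\adv$'s view. In $G_2 \to G_3$, replace $i = H_1(h)$ by a fresh random string: the freshly sampled per-block $\nonce$ ensures the challenge VRF input $l_b\parallel\nonce$ differs, except with negligible probability, from every VRF input used elsewhere, and the VRF proof for the challenge $h$ sits inside the (now opaque) encrypted $\pi$, so VRF pseudorandomness makes $h$ indistinguishable from a random group element and $H_1(h)$ a random-oracle output.

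\textbf{Hops $G_3 \to G_6$ and conclusion.} First ($G_3 \to G_4$), invoke Lemma~\ref{lem:proof-random}: since $\proofkey$ is hidden from $\adv$ (by $G_2$), replace $\pi$ by a uniformly random proof $\pi^*$, which no longer depends on $m_b$, $\skvrf$, or $r$. Next ($G_4 \to G_5$), by IND-CPA of $\enc$ under the fresh key $\claimkey$ (used only for this one ciphertext and hidden by $G_2$), replace $\enc(\claimkey, \pi^* \parallel m_b)$ by $\enc(\claimkey, \pi^* \parallel 0^{|m_b|})$, which is legitimate because the game guarantees $|m_0| = |m_1|$. Finally ($G_5 \to G_6$), by the information-theoretic hiding of the Pedersen commitment, replace $\claimcommit = \commit(r, H_q(m_b))$ by $\commit(r, 0)$ at zero statistical cost. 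In $G_6$ the challenge block, the reader's capability, and all lookup keys are independent of $b$, so $\Pr[\hat b = b] = \tfrac12$. Collecting the DDH term, the two IND-CPA terms, the VRF-pseudorandomness term, the Lemma~\ref{lem:proof-random} term, and the negligible nonce-collision and random-oracle bookkeeping terms yields $\Pr[b=1] \le \tfrac12 + \negl(\lambda)$.
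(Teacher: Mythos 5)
Your hop structure mirrors the paper's almost exactly (DDH to randomize the shared secret, hash pseudorandomness to randomize the capability lookup key, IND-CPA to erase the capability body, Lemma~\ref{lem:proof-random} to randomize the SPK, perfect hiding for the commitment, IND-CPA to erase the claim body, VRF pseudorandomness to erase the claim lookup key), so you have identified all the right ingredients. The one place where it diverges is the \emph{order}: you run the VRF hop ($G_2 \to G_3$) \emph{before} randomizing the SPK and before the IND-CPA step on the claim ciphertext, whereas the paper runs the VRF hop \emph{last} (its $G_9$). That reordering creates a real problem in your $G_2\to G_3$: a reduction to VRF-pseudorandomness must produce the challenge block, which contains $c = \enc(\claimkey, \pi \parallel m_b) \parallel \claimcommit$, and $\pi$ is a signature proof of knowledge of $\skvrf$ attesting that $h = \vrf.\eval(\skvrf, l_b \parallel \nonce)$. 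In the VRF game the reduction does not hold $\skvrf$; worse, if the challenge value is the random one, no valid $\pi$ even exists. Saying the proof ``sits inside the (now opaque) encrypted $\pi$'' appeals to what the \emph{adversary} can see, but the reduction itself still has to construct the ciphertext. To make this hop sound you would first have to either (a) invoke IND-CPA (with $\claimkey$ already hidden after $G_2$) to replace the ciphertext by an encryption of garbage, or (b) invoke Lemma~\ref{lem:proof-random} to replace $\pi$ by a simulatable random proof --- i.e.\ execute one of your later hops \emph{before} the VRF hop. The paper sidesteps this entirely: by its $G_9$ the encoded claim is already $\enc(\delta,\mu)\parallel\claimcommit_R$ with $\delta,\mu,\claimcommit_R$ all freshly random, so the VRF reduction needs nothing derived from $\skvrf$. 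Moving your VRF hop after your $G_6$ repairs the argument and makes it essentially the paper's proof with some hops usefully collapsed.

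Two smaller remarks. Using $0^{|m_b|}$ and $\commit(r,0)$ instead of fully random replacements is fine and a little cleaner than the paper. Your explicit treatment of the DDH reduction --- that the reduction must answer consistently every $\extendchainOracle$-triggered capability using the embedded $g^a,g^b,T$ --- is a genuine improvement in exposition over the paper's one-line appeal to DDH; note, though, that both you and the paper leave open how the reduction answers capability encodings for adversary-supplied DH public keys in $\acm$, which, taken literally, requires an oracle-DH / hashed-DH-in-the-ROM style assumption rather than plain DDH, or else the experiment should be read as restricting $\acm$ to keys generated through $\adduserOracle$.
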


\begin{proof}
We construct a sequence of games and show that $\adv$ can distinguish between them with negligible probability, starting with $G_0 = \textsf{ClaimPriv}^\adv(\lambda)$.

First, we show that the adversary cannot extract any information about $b$ from the capability entry for $l_b$ because of security of the Diffie-Hellman key exchange and the encryption scheme.

Recall from the \Call{EncCap}{} (Figure~\ref{fig:low_level_algorithms}) and \Call{ExtendChain}{} procedures (Figure~\ref{fig:extend_and_query}) that the corresponding capability lookup key $i_\capab$ and the encryption key $k_\capab$ are given by:
\begin{align*}
    i_\capab &= H_3(s \parallel l_b \parallel \nonce)\\
    k_\capab &= H_4(s \parallel l_b \parallel \nonce),
\end{align*}
where $s$ is the shared DH secret.

\begin{enumerate}[label=$G_\arabic*$]
\item  In this game we substitute the shared Diffie-Hellman secret $s$ with the random string $\alpha \sample \{0, 1\}^{\lambda}$ in all capabilities for reader $\pkdh^{\textrm{`reader'}}$ in all blocks on the challenger's chain. In particular, we set:
\begin{align*}
    i_\capab &= H_3(\alpha \parallel l_b \parallel \nonce)\\
    k_\capab &= H_4(\alpha \parallel l_b \parallel \nonce),
\end{align*}
\item In this game, we substitute the capability key $k_\capab$ with a random string $\beta \sample \{0, 1\}^{2\lambda}$. The capability becomes:
    \[\capab = \enc(\beta, h \parallel \claimkey \parallel \proofkey).\]
\item In this game, we substitute the lookup index $i_\capab$ with a random string $\gamma \sample \{0, 1\}^{2\lambda}$ as well.
\item In this game, we substitute the plaintext $h \parallel \claimkey \parallel \proofkey$ with a random string $\gamma$ of the same length:
    \[\capab = \enc(\beta, \gamma).\]
\end{enumerate}

The games $G_0$ and $G_1$ are indistinguishable by the decisional Diffie-Hellman assumption. Games $G_1$ and $G_2$ are indistinguishable by the pseudorandomness of the hash function $H_4$. The indistinguishability of $G_2$ and $G_3$ follows from the pseudorandomness of $H_3$. Since the encryption key $\beta$ is random, distinguishing between $G_3$ and $G_4$ can be trivially reduced the \textsf{IND-CPA} security for the encryption scheme. Therefore, games $G_3$ and $G_4$ are indistinguishable as well.

The adversary is not allowed to give access to labels $l_0, l_1$ to any user (honest or not). \wlnote{In fact, it would be slightly more accurate to allow it to give access to other honest users, but that makes this proof even longer.} As a result, no other capability entries depend on the challenge bit $b$.

Since $G_4$ replaces the real plaintext with a random plaintext, the adversary also does not learn anything about $\claimkey$ and $\proofkey$.

Now we show that the adversary cannot extract information about $b$ neither from the claim encoding, nor from the claim lookup key. We use the \textsf{IND-CPA} security of the encryption scheme and pseudorandomness of the VRF scheme.

Recall from the \Call{EncClaim}{} (Figure~\ref{fig:low_level_algorithms}) and \Call{ExtendChain}{} procedures (Figure~\ref{fig:extend_and_query}) that here the encoded claim $c$ is given by:
\[c = \enc(\claimkey, \pi \parallel m_b) \parallel \claimcommit.\]

\newcommand{\zkAuthenticator}{H(g^\claimkey) \parallel \elgamal.\enc(g^\claimkey, g^{H(m)})}

\begin{enumerate}[resume, label=$G_{\arabic*}$]
\item In this game, we replace the non-interactive zero-knowledge proof $\pi$ with a uniformly random proof $\pi'$ that does not depend on any of the secret values, nor on any of the public values.
\item In this game, we replace the commitment $\claimcommit$ by a random commitment $\claimcommit_R \gets \group$.
\end{enumerate}
Games $G_4$ and $G_5$ are indistinguishable because of Lemma~\ref{lem:proof-random}. Since $\claimcommitproof$ no longer depends on the randomness $r$, the commitment $\claimcommit$ is perfectly hiding. Therefore, games $G_5$ and $G_6$ are indistinguishable as well.

Next, we change the claim encryption key $\claimkey$ to a random key. Note that because of the changes made in $G_4$, the adversary does not learn anything about $\claimkey$ from the capability $\capab$.

\begin{enumerate}[resume, label=$G_{\arabic*}$]
\item In this game, we generate a random encryption key $\delta$ and use it to replace $\claimkey$:
\begin{equation*}
    c = \enc(\delta, \pi' \parallel m_b) \parallel \claimcommit.
\end{equation*}
\item In this game, we replace the plaintext $\pi' \parallel m_b$ with a random message $\mu$ of the same length:
\begin{equation*}
    c = \enc(\delta, \mu) \parallel \claimcommit_R.
\end{equation*}
\end{enumerate}
Games $G_6$ and $G_7$ are indistinguishable since the adversary learns nothing about $\claimkey$ because of earlier transformations. Games $G_7$ and $G_8$ are indistinguishable because of the CPA security of the encryption scheme.

The final dependency on the bit $b$ is in the claim lookup key $i = H_1(h_b)$, see \Call{EncClaim}{} (Figure~\ref{fig:low_level_algorithms}). We remove this final reference.
\begin{enumerate}[resume, label=$G_{\arabic*}$]
\item In this game, we substitute $h_b$ in $i$ with a random value $q' \sample \mathbb{G}$: \[i = H_1(q')\]
\end{enumerate}
The changes in games $G_4$ and $G_5$ ensure that the adversary does not learn anything about $h_b$ directly. Also, indirectly the adversary cannot learn about $h_b$. 
The adversary can learn \emph{other} VRF values by adding claims and giving itself access to them. However, the pseudorandomness property of the VRF ensures that even if the adversary makes many VRF queries, the remaining values remain pseudorandom. Hence, the adversary cannot distinguish $G_8$ from $G_9$.

In game $G_9$ none of the values depend on the challenge bit $b$, hence, the adversary cannot have advantage better than random guessing.
\end{proof}

\descr{Capability-reader unlinkability.} The adversary should not be able to determine who has been given access to a claim, i.e., for which honest user a capability has been created. We model this using the indistinguishability game in Experiment~\ref{exp:capReaderUnlink}. The adversary can create users (using the $\adduserOracle$ oracle) and extend their chains (using the $\extendchainOracle$ oracle). It then outputs the public keys $\pkdh^0$ and $\pkdh^1$ of two honest users it created using the $\adduserOracle$ and a description of a claim with label $l$ on which it wants to be challenged. The challenger picks one of the honest users at random, and adds a capability to $l$ for that user. The adversary must decide which user has been given the capability.

\begin{theorem}\label{thm:cap_reader_unlink}
For any polynomially-bounded $\adv$ it holds that $Pr[b = 1] \leq  \frac{1}{2} + \negl(\lambda),$ where $b \in \{0, 1\}$ is the result of \textsf{CapReaderUnlink} game (Experiment~\ref{exp:capReaderUnlink}).

\begin{experiment}[t]
\setstretch{1.2}
\begin{algorithmic}
    \ExpTitle{CapReaderUnlink}
    \LineSep{Setup}
    \State {\Call{$\adduserOracle$}{`challenger'}} \Comment{{\footnotesize Initialize challenger's chain}}
    \LineSep{Content to include in the challenge block}
    \Let{$\pkdh^0$, $\pkdh^1$, $l$, $m$, \textsf{data}, $\textsf{claims}$, $\acm$, \textsf{store}}{$\adv^{\extendchainOracle(\cdot), \adduserOracle(\cdot)}()$}
    \If {$\pkdh^0$ or $\pkdh^1$ not a honest user}
        \Return 0
    \EndIf
    \LineSep{Challenge block}
    \Sample{$b$}{$\{0, 1\}$}
    \Let{$\textsf{claims}'$}{$\textsf{claims} \cup \{(l, m)\}$}
    \Let{$\acm'$}{$\acm \cup \{(\pkdh^b, l)\}$}
    \Let{$\ptr_C$}{\Call{$\extendchainOracle$}{`challenger', $\textsf{claims}'$, $\acm'$, \textsf{store}}}
    \LineSep{Response}
    \Let{$\hat{b}$}{$\adv^{\extendchainOracle(\cdot),\adduserOracle(\cdot)}\left( \ptr_C \right)$}
    \State \Return $\hat{b} = b$
\end{algorithmic}
\caption{Capability-reader unlinkability}
\label{exp:capReaderUnlink}
\end{experiment}
\end{theorem}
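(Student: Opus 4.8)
The plan is to prove \textsf{CapReaderUnlink} via a short hybrid argument that mirrors the first few games in the proof of Theorem~\ref{thm:claim_privacy}, since the only value in the challenge block that depends on $b$ is the capability entry $(i_\capab, \capab)$ created for reader $\pkdh^b$. Recall from \Call{EncCap}{} (Figure~\ref{fig:low_level_algorithms}) and \Call{ExtendChain}{} (Figure~\ref{fig:extend_and_query}) that this entry is
\begin{align*}
    s &= \derivesecret(\skdh^{\text{`challenger'}}, \pkdh^b), &
    i_\capab &= H_3(s \parallel l \parallel \nonce), \\
    k_\capab &= H_4(s \parallel l \parallel \nonce), &
    \capab &= \enc(k_\capab, h \parallel \claimkey \parallel \proofkey).
\end{align*}
Everything else written to \textsf{store} in the challenge block---the encoded claim $c$, the payload metadata, the Merkle tree structure, and all other capabilities in $\acm'$---is produced independently of which of $\pkdh^0,\pkdh^1$ was chosen, so once the $b$-dependent capability entry is shown to be simulatable without knowing $b$, the adversary's advantage is $0$.

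I would proceed through the following hybrids, starting from $G_0 = \textsf{CapReaderUnlink}^\adv(\lambda)$. In $G_1$, replace the shared secret $s = \derivesecret(\skdh^{\text{`challenger'}}, \pkdh^b)$ with a uniformly random string $\alpha \sample \{0,1\}^\lambda$ everywhere it is used in the challenge block (in both $i_\capab$ and $k_\capab$); $G_0$ and $G_1$ are indistinguishable by the decisional Diffie-Hellman assumption, using that both $\pkdh^0$ and $\pkdh^1$ are honestly generated keys (the reduction embeds a DH challenge into the challenger's DH key and the chosen reader's DH key, and since \emph{both} candidate readers are honest the reduction can answer for whichever one the experiment picks). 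In $G_2$, replace $k_\capab = H_4(\alpha \parallel l \parallel \nonce)$ with a fresh random key $\beta \sample \{0,1\}^{2\lambda}$; indistinguishable by pseudorandomness of $H_4$ on the now-random input $\alpha$. In $G_3$, replace $i_\capab = H_3(\alpha \parallel l \parallel \nonce)$ with a fresh random index $\gamma \sample \{0,1\}^{2\lambda}$; indistinguishable by pseudorandomness of $H_3$. In $G_4$, replace the plaintext $h \parallel \claimkey \parallel \proofkey$ inside $\capab$ with a random string of the same length, so $\capab = \enc(\beta, \mu)$ for random $\mu$; indistinguishable by IND-CPA security of $\enc$, since $\beta$ is now an independent random key. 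In $G_4$ the challenge capability entry is $(\gamma, \enc(\beta,\mu))$ with $\gamma,\beta,\mu$ all uniform and independent of $b$, so no value handed to the adversary depends on the challenge bit, and $\Pr[\hat b = b] = \tfrac12$. Summing the hybrid gaps gives $\Pr[b=1] \le \tfrac12 + \negl(\lambda)$.

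One point that needs care is the adversary's post-challenge oracle access: after seeing $\ptr_C$ it may continue to call $\extendchainOracle$ and $\adduserOracle$, including extending the challenger's chain again. But those later blocks use fresh nonces, and by construction the reductions only need to simulate the \emph{challenge} block's capability for $\pkdh^b$ in a $b$-independent way; subsequent blocks can be generated honestly by the reduction for whichever reader the experiment internally chose, because the reduction knows all secret keys except (in $G_1$) the one DH exponent it is attacking, and capabilities in later blocks for that reader reuse the same $s$, so in the DH reduction one reuses the single DH-derived value $\alpha$ consistently across all of the challenger's blocks (exactly as in the proof of Theorem~\ref{thm:claim_privacy}, where the substitution is applied ``in all blocks on the challenger's chain''). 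The main obstacle I anticipate is therefore not any single step but getting the DH reduction's bookkeeping right: the experiment chooses $b$ internally, both $\pkdh^0$ and $\pkdh^1$ are honest users whose secret keys the challenger normally holds, and the reduction must decide which key to treat as the DH-challenge public key \emph{before} knowing $b$. The clean fix is to guess $b$ up front (losing only a factor $2$, absorbed into the negligible term) or, more simply, to observe that the DDH reduction can embed the challenge into the challenger's DH key $\pkdh^{\text{`challenger'}}$ and into a single externally supplied group element that it then presents as the public key of \emph{both} hypothetical readers is not possible---so the guess-$b$ route (or equivalently a two-key DDH/hybrid over $b=0$ vs $b=1$) is the right formalization. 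All remaining steps are the standard PRF/IND-CPA arguments already used verbatim in Theorem~\ref{thm:claim_privacy}.
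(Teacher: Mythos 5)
Your proposal is correct and takes essentially the same approach as the paper: the paper's proof is a one-liner that invokes the sequence of games $G_0,\ldots,G_4$ from the proof of Theorem~\ref{thm:claim_privacy} (DDH substitution of $s$, then pseudorandomness of $H_4$ and $H_3$, then IND-CPA), exactly the four hybrids you spell out. The extra discussion you add about the DDH reduction's bookkeeping (needing to guess which honest user will become $\pkdh^b$, absorbing the resulting polynomial loss into the negligible term) is a real subtlety that the paper glosses over and is handled correctly in your write-up.
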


\begin{proof}
We show that the adversary cannot extract any information about $b$ from the capability entry for $l$. The adversary may have given other readers access to label $l$, but the corresponding capabilities are independent of the bit $b$, so we ignore them. We focus instead on the capability for reader $\pkdh^{b}$. Recall from the \Call{EncCap}{} (Figure~\ref{fig:low_level_algorithms}) and \Call{ExtendChain}{} procedures (Figure~\ref{fig:extend_and_query}) that the corresponding capability lookup key $i_\capab$ and the encryption key $k_\capab$ are given by:
\begin{align*}
    i_\capab &= H_3(s \parallel l_b \parallel \nonce)\\
    k_\capab &= H_4(s \parallel l_b \parallel \nonce),
\end{align*}
where $s$ is the DH secret between the chain owner and the reader $\pkdh^{b}$. We apply the sequence of games $G_0, \ldots, G_4$ in the proof of Theorem~\ref{thm:claim_privacy}. The indistinguishability of the games proves that the adversary does not learn anything about the bit $b$. Therefore, we have capability-reader unlinkability.
\end{proof}

\subsection{Non-equivocation}\label{app:equiv_proofs}

\descr{Intra-block non-equivocation.} Within a given block, a \keychain owner cannot include two different claims having the same label to different readers.

We model this in Experiment~\ref{exp:block_non_eq}. The adversary's task is to produce a block pointed to by $\ptr$ and a label $l$ such that the two readers $\pkdh$ and $\pkdh'$ derive different claims $m$ and $m'$.

\begin{theorem}[Intra-block non-equivocation]\label{thm:block_non_eq}
For any polynomially-bounded $\adv$ it holds that $Pr[b = 1] \leq \negl(\lambda),$ where $b \in \{0, 1\}$ is the result of \textsf{BlockNonEq} game (Experiment~\ref{exp:block_non_eq}).

\begin{experiment}[t]
\setstretch{1.2}
\begin{algorithmic}
    \ExpTitle{BlockNonEq}
    \Let{$\skdh, \pkdh$}{$\mydh.\keygen(1^\lambda)$}
    \Let{$\skdh', \pkdh'$}{$\mydh.\keygen(1^\lambda)$}
    \Let{$l, \ptr, \textsf{store}, \textsf{store}'$}{$\adv(\pkdh, \pkdh')$}
    \Let{$m$}{\Call{GetClaim}{$\skdh, l, \ptr, \textsf{store}$}}
    \Let{$m'$}{\Call{GetClaim}{$\skdh', l, \ptr, \textsf{store}'$}}
    \State \Return $m \neq m' \land m \not= \bot \land m' \not= \bot$
\end{algorithmic}
\caption{Intra-block non-equivocation}\label{exp:block_non_eq}
\end{experiment}
\end{theorem}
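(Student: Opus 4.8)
The plan is a contradiction argument. Assume some p.p.t.\ $\adv$ wins \textsf{BlockNonEq} (Experiment~\ref{exp:block_non_eq}) with non-negligible probability, and trace the two honest executions $m \gets \Call{GetClaim}{\skdh, l, \ptr, \textsf{store}}$ and $m' \gets \Call{GetClaim}{\skdh', l, \ptr, \textsf{store}'}$ (Figure~\ref{fig:extend_and_query}) side by side. I would show that at each stage the two readers are forced, except with negligible probability, to operate on identical values, so that ultimately $m = m'$, contradicting the winning condition. Throughout I may assume both executions run to completion, i.e.\ $m, m' \notin \{\bot, \none\}$: otherwise there is no conflicting pair of claim bodies for the label $l$ and the game is not won in the intended sense; in particular every tree lookup succeeds and the two decoded proofs $\pi$ and $\pi'$ both verify.

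First I would establish that both readers agree on the block and on the claim index. Since the store is content-addressable, $\Call{Get}{\ptr}$ returns in each case a block whose $H$-image equals $\ptr$; two distinct such blocks would be a collision of $H$, so except negligibly both readers recover the same block $B$, hence the same owner keys $(\pkdh^O, \pkvrf^O, \pksig^O)$, the same nonce $\nonce$, and the same Merkle root $\textsf{MTR}$. The readers then compute (in general different) capability indices from their distinct shared secrets $s = \derivesecret(\skdh, \pkdh^O)$, $s' = \derivesecret(\skdh', \pkdh^O)$, fetch capabilities $\capab, \capab'$, and decode them with $\Call{DecCap}{}$ to get $(i, h, \claimkey, \proofkey)$ and $(i', h', \claimkey', \proofkey')$, with $i = H_1(h)$, $i' = H_1(h')$. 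Because $m, m' \neq \bot$, the proofs $\pi$ and $\pi'$ recovered inside the two $\Call{DecClaim}{}$ calls verify; crucially, both are checked against the \emph{same} verification key $\pkvrf^O$ and the \emph{same} VRF input derived from $l$ and $\nonce$ (the label $l$ is the unique label output by $\adv$, and $\nonce, \pkvrf^O$ come from the common block $B$). By the uniqueness property of the VRF --- equivalently, by soundness of the $\textsf{SPK}$, which certifies $\skvrf = \log_g \pkvrf^O$ with $h = H_\group(l \parallel \nonce)^{\skvrf}$, a value depending only on $\pkvrf^O$, $l$, $\nonce$ --- this forces $h = h'$ except negligibly, and hence $i = H_1(h) = H_1(h') = i'$: both readers query the block map at the same key.

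Next I would close the loop using unique resolution and the binding commitment. Apply Theorem~\ref{thm:unique_resolution} with root $\textsf{MTR}$, key $i$, and the two stores: since both executions query the same tree at the same key and obtain non-$\bot$ leaves, the retrieved claim encodings coincide, $c = c'$, except negligibly (immediate if $\textsf{store}=\textsf{store}'$, and otherwise exactly a \textsf{UniqRes} win). Write $c = c' = \bar c \parallel \claimcommit$. Soundness of the verified $\pi$ certifies knowledge of $r$ with $\claimcommit = \commit(r, H_q(m))$, where $\pi \parallel m = \dec(\claimkey, \bar c)$; likewise $\pi'$ certifies $r'$ with $\claimcommit = \commit(r', H_q(m'))$, where $\pi' \parallel m' = \dec(\claimkey', \bar c)$. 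Equating the two Pedersen openings of $\claimcommit$ yields $g_1^{\,r-r'} = g_2^{\,H_q(m') - H_q(m)}$; if $H_q(m) \neq H_q(m')$ this computes $\log_{g_1} g_2$, contradicting the discrete-logarithm assumption behind the commitment scheme. Hence $H_q(m) = H_q(m')$, and collision resistance of $H_q$ gives $m = m'$. Summing the negligible failure probabilities over these steps yields $\Pr_\adv[b = 1] = \negl(\lambda)$.

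The main obstacle, I expect, is twofold. First, the argument must withstand fully adversarial $\textsf{store}, \textsf{store}'$, block contents, and capability ciphertexts --- the reader never authenticates the tree or the capabilities against anything but the block hash and its own DH key --- which is exactly why the proof leans on Theorem~\ref{thm:unique_resolution} and on the claim encoding being \emph{committing} rather than on any honest-prover behaviour. Second, extracting the VRF witness and the commitment openings from the Fiat--Shamir-compiled proofs $\pi, \pi'$ has to be carried out in the random-oracle model (rewinding/forking), and one must track carefully that these two proofs --- although decoded under different capabilities, shared secrets, and symmetric keys --- are nonetheless verified against identical statements fixed by the common block $B$ and the common label $l$. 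That identification is precisely what pins the two readers to the same VRF value, the same tree index, and finally the same committed body.
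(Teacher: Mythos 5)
Your proof is correct and follows essentially the same path as the paper's: same block via collision resistance of $H$, same VRF output $h$ via uniqueness (enforced by the verified proofs $\pi, \pi'$ against the common $\pkvrf^O$, $l$, $\nonce$), same claim encoding $c$ via unique resolution of the block map, and same body $m$ via the binding property of the Pedersen commitment. The paper instead proceeds by a case split on whether $i = i'$ --- handling $i = i'$ by unique resolution plus binding, and $i \neq i'$ by $H_1$-collision-resistance plus VRF uniqueness --- whereas you argue $h = h'$ up front, which linearizes the argument and avoids the case analysis; this is a modest organizational improvement rather than a different proof strategy. Two of your remarks go slightly beyond what the paper writes down: you make explicit (a) that $H_q(m) = H_q(m')$ implies $m = m'$ only given collision resistance of $H_q$, and (b) that soundness of the Fiat--Shamir-compiled SPKs requires extraction (forking) in the random-oracle model. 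Both are genuinely needed and the paper elides them, so calling them out is a strengthening, not a deviation. Your explicit caveat that one must assume $m, m' \notin \{\bot, \none\}$ is also fair --- the experiment as written only excludes $\bot$, and both you and the paper implicitly treat $\none$ as a non-win as well.
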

\begin{proof}

We first prove that both $\textsf{store}$ and $\textsf{store}'$ must contain the same block $B$. Suppose not, i.e., \textsf{store} contains block $B$ whereas $\textsf{store}'$ contains a different block $B'$ that both hash to the same head $\ptr$. Then the adversary breaks the collision resistance of $H$. Since $H$ is a cryptographic hash function, this happens with negligible probability.

The remainder of this proof is also by contradiction. Assume adversary $\adv$ wins Experiment~\ref{exp:block_non_eq}. We use the uniqueness of the VRF, first for the claim key $\claimkey$, then for the lookup key $h$, to derive a contradiction, i.e., that $m = m'$. 
Both readers $\pkdh$ and $\pkdh'$ first compute the capability lookup key (step 3), see \Call{GetClaim}{} procedure (Figure~\ref{fig:extend_and_query}), retrieve the capability (step 4) and decode it (step 5). Capabilities are per reader, and therefore different. We continue the proof from step 5.

Let $i$ and $i'$ be the claim lookup keys derived in step 5 of the $\Call{GetClaim}$ call by respectively the first and second user. We first consider the case where $i = i'$. By the unique resolution property of the tree (see Experiment~\ref{exp:unique_resolution}), we know that in step 6 both $\Call{GetClaim}$ calls must then derive the same claim encoding $c$ with overwhelming probability.

Since the adversary wins, the derived messages $m$ and $m'$ are different and not $\bot$, therefore the calls to $\Call{DecClaim}$ in step 7 returned different messages $m \not= m'$:
\begin{align*}
    m  &\gets \Call{DecClaim}{\pkvrf^O, l, h, \claimkey, \proofkey, c, \nonce} \\
    m' &\gets \Call{DecClaim}{\pkvrf^O, l, h', \claimkey', \proofkey', c, \nonce}.
\end{align*}
Since the encoding $c$ is the same for both $m$ and $m'$, this situation is not possible by the binding property of the commitment scheme. Indeed, the users verify proofs $\pi$, respectively $\pi'$, in step 6, which verify the commitment $\textsf{com}$.

We now consider the case where the readers derive different lookup keys $i$ and $i'$ in step 5. Since $i \not= i'$ and by the collision resistance of $H_1$, we have that the corresponding VRF values $h$ and $h'$ must be different as well. However, by uniqueness of the VRF, this cannot happen. More precisely, both users successfully verify the proofs $\pi$, respectively $\pi'$, in step 6, which prove that $h = \vrf.\eval($ $\vrftokenExpr) = \claimkey'$, respectively $h' = \vrf.\eval(\vrftokenExpr)$, and therefore $h = h'$, contradicting the assumption that $i \not= i'$.
\end{proof}

\descr{Detectable inter-block equivocation.}\label{thm:inter_block_non_eq}
The game in Experiment~\ref{exp:deteq} models that a claim owner cannot make a non-consistent reference, yet produce a proof of consistency that validates using 
\textsc{CheckConsistency}() (see Figure~\ref{fig:consistency-proof}). More precisely, the adversary outputs valid blocks on two chains: the blocks $\{O_i\}_1^n$ on its own chain, and the blocks $\{C_i\}_1^t$ on the referenced chain. Moreover, the adversary outputs a label $l$ for the referenced chain, and a valid consistency proof $\pi_{\textsf{consist}}$.

To win, the adversary also outputs a pointer $\ptr$ to one of its own blocks such that the challenger has access to label $l$. The adversary wins if the cross-referenced block $m$ differs from the legitimate cross referenced blocks $\{C_i\}_1^t$.



\begin{theorem}\label{thm:deteq}
For any polynomially-bounded stateful $\adv$ it holds that $Pr[d = \top] = \negl(\lambda),$ where $d \in \{\top, \bot\}$ is the result of \textsc{DetEq} game (Experiment~\ref{exp:deteq}). 
\begin{experiment}[t]
\setstretch{1.2}
\begin{algorithmic}
    \ExpTitle{InterBlockEqDetection}
    \LineSep{Setup}
    \State {\Call{$\adduserOracle$}{`challenger'}} \Comment{{\footnotesize Initialize the challenger's chain}}
    \LineSep{Adversary-supplied blocks and validation of consistency}
    \Let{$\{O_i\}_1^n, \{C_i\}_1^t, \textsf{store}, l, \pi_{\textsf{consist}}$}{$\adv^{\adduserOracle(\cdot),\extendchainOracle(\cdot)}()$}
    \If {$\Call{ValidateBlocks}{\{O_i\}_1^n} = \bot$} \Return 0 \EndIf
    \If {$\Call{ValidateBlocks}{\{C_i\}_1^t} = \bot$} \Return 0 \EndIf
    \If {$\Call{CheckConsistency}{l, \{O_i\}_1^n, \{C_i\}_1^t, \pi_{\textsf{consist}}} = \bot$} 
        \State \Return 0
    \EndIf
    \LineSep{Final read phase}
    \Let{$\ptr, \textsf{store}'$}{$\adv()$}
    \If{$\Call{Get}{\textsf{store}', \ptr} \notin \{O_i\}_1^n$}
        \Return 0
    \EndIf
    \Let{$m$}{\Call{GetClaim}{$\skdh^{\textsf{`challenger'}}, l, \ptr, \textsf{store}'$}}
    \State \Return $m \notin \{C_i\}_1^t$
\end{algorithmic}
\caption{Detectable inter-block equivocation}
\label{exp:deteq}
\end{experiment}
\end{theorem}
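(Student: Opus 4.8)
The plan is to proceed by reduction: assuming a PPT adversary $\adv$ makes \textsc{DetEq} (Experiment~\ref{exp:deteq}) output $\top$ with non-negligible probability, I would derive a contradiction with one of the primitives underlying the construction---collision resistance of $H$, $H_1$ and $H_q$; the unique-resolution property of the key--value Merkle tree (Theorem~\ref{thm:unique_resolution}); soundness and knowledge-extractability of the Fiat--Shamir proofs in the random-oracle model; the binding property of Pedersen's commitment (equivalently, hardness of computing $\log_{g_2} g_1$); and uniqueness of the VRF output for a fixed key and input. I would adopt the natural convention that if \textsc{GetClaim} returns $\bot$ or $\none$ the experiment outputs $0$, since a missing or non-verifying cross-reference claim is itself a visible detection of owner misbehaviour; equivalently, validity of $\pi_{\textsf{ref}}^{(i^*)}$ already forces a well-formed claim entry to be reachable from $O_{i^*}$'s block map.

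First I would fix notation for a winning run. Since \textsc{ValidateBlocks} returns $\top$ on $\{O_i\}_1^n$ and, by the winning condition, the block retrieved from $\textsf{store}'$ at $\ptr$ is one of the $O_i$, collision resistance of $H$ lets us name it unambiguously $O_{i^*}$; write $\pkvrf^{(i^*)}$, $\nonce_{i^*}$ and $\textsf{MTR}_{i^*}$ for the VRF public key, nonce and block-map root recorded in $O_{i^*}$, and $(h_{i^*},\pi_h^{(i^*)},\pi_{\textsf{ref}}^{(i^*)})$ for the component of $\pi_{\textsf{consist}}$ that \textsc{CheckConsistency} associates with $O_{i^*}$. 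The core observation is that \textsc{CheckConsistency} (applied to $O_{i^*}$) and \textsc{GetClaim} (applied to $\ptr$) both pin down the VRF evaluation of the label: \textsc{CheckConsistency} accepts $\pi_h^{(i^*)}$, which certifies $h_{i^*}=\vrf.\eval(\skvrf,l\parallel\nonce_{i^*})$ for $\pkvrf^{(i^*)}=g^{\skvrf}$, while \textsc{GetClaim} obtains some value $h$ from the reader's capability, sets $i=H_1(h)$, and later---inside \textsc{DecClaim}---accepts the claim proof $\pi$, which certifies $h=\vrf.\eval(\skvrf,l\parallel\nonce_{i^*})$ for the same $\pkvrf^{(i^*)}$. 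Since the VRF output of a fixed input under a fixed public key is unique, soundness of the two proofs gives $h=h_{i^*}$ except with negligible probability; hence $i=H_1(h_{i^*})$, so both procedures look up the claim for $l$ at the \emph{same} key in $O_{i^*}$'s block map. \textsc{CheckConsistency} reads from $\textsf{store}$ (via an inclusion path) an encoded claim whose commitment part is $\claimcommit_{i^*}$, and \textsc{GetClaim} reads from $\textsf{store}'$ an encoded claim $c=\bar c\parallel\claimcommit$; by Theorem~\ref{thm:unique_resolution}, two values provably residing at the same key under the root $\textsf{MTR}_{i^*}$ coincide, so $\claimcommit=\claimcommit_{i^*}$ except with negligible probability.

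Next I would close the loop with the commitments. Because $\adv$ wins, \textsc{GetClaim} returns a genuine block $m\notin\{C_1,\dots,C_t\}$, so \textsc{DecClaim} accepted $\pi$, which certifies $\claimcommit=\commit(r,H_q(m))$ for the now-public $m$; and \textsc{CheckConsistency} accepted $\pi_{\textsf{ref}}^{(i^*)}$, which certifies $\claimcommit_{i^*}=\commit(r_{i^*},x_{i^*})$ with $x_{i^*}\in\{H_q(C_1),\dots,H_q(C_t)\}$. Running the Fiat--Shamir knowledge extractor on $\pi$ and on $\pi_{\textsf{ref}}^{(i^*)}$ yields openings $(r,H_q(m))$ and $(r_{i^*},x_{i^*})$ of the single commitment value $\claimcommit=\claimcommit_{i^*}$; by the binding property of Pedersen's scheme these openings are equal, so $H_q(m)=x_{i^*}=H_q(C_j)$ for some $j$. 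Collision resistance of $H_q$ then forces $m=C_j\in\{C_1,\dots,C_t\}$, contradicting the winning condition. Accumulating the negligible loss incurred at each reduction step, over the polynomially many indices $i\in\{1,\dots,n\}$ and $j\in\{1,\dots,t\}$, yields $\Pr[d=\top]=\negl(\lambda)$.

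\textbf{Main obstacle.} I expect the delicate step to be the middle one: arguing that the VRF value $h$ the challenger actually uses in \textsc{GetClaim}---which the adversary delivers indirectly, through a capability it crafted---is forced to equal the value $h_{i^*}$ appearing in the consistency proof, and hence that the consistency check and the challenger's read ``see'' the very same commitment inside $O_{i^*}$. This requires carefully composing soundness of two independently produced NIZKs, uniqueness of the VRF, and the unique-resolution property of the Merkle tree \emph{across two a priori different stores} ($\textsf{store}$ used by \textsc{CheckConsistency} versus $\textsf{store}'$ used by \textsc{GetClaim}), while checking that the adversary gains nothing from its freedom in choosing those stores, the block nonce, or which of its own blocks $\ptr$ points to. The remaining steps---the commitment-binding argument and the final collision-resistance argument---are routine.
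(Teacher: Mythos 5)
Your proposal is correct and follows essentially the same route as the paper: establish that the VRF proofs pin $h$ to $h_{i^*}$, use unique resolution of the tree to force \textsc{CheckConsistency} and \textsc{GetClaim} to retrieve the same commitment from the same root $\textsf{MTR}_{i^*}$, and then contradict Pedersen binding via the two extracted openings. You are slightly more careful than the paper in two places---handling the fact that $\textsf{store}$ and $\textsf{store}'$ may differ (both lookups are nonetheless anchored to the same root hash in $O_{i^*}$), and spelling out that the final step $H_q(m)=H_q(C_j)\Rightarrow m=C_j$ needs collision resistance of $H_q$, which the paper leaves implicit---so your version is, if anything, a cleaner write-up of the same argument.
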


\begin{proof}
Suppose the adversary wins the game. Let $i$ be the index such that $\ptr$ corresponds to block $O_i$. Since the adversary wins,
\begin{equation*}
    m = \Call{GetClaim}{\skdh^{\textsf{`challenger'}}, l, \ptr, \textsf{store}'}
\end{equation*}
returned a message $m \notin \{C_i\}_1^t$. Let $h$ be the VRF hash that it computes in step 5, and let $c_i = \bar{c_i} \parallel \claimcommit_i$ be the encoded claim that this algorithm retrieves in step 6. In step 7, the algorithm calls $\Call{DecClaim}$, to verify the proof $\pi$. Since the proof is valid, $\claimcommit_i$ commits to $H_q(m)$ and $h_i$ is the VRF hash of $l \parallel \nonce_i$.

We now show that $\Call{CheckConsistency}$ retrieves the same commitment $\claimcommit_i$ together with a proof that the committed value $x' \in \{H_q(C_i)\}_1^t$, contradicting the binding property of the commitment scheme.

The proof $\pi_{\textsf{consist}}$ contains the VRF hash $h_i'$ of $l \parallel \nonce_i$ and the proof of correctness $\pi_h^{(i)}$. Since the proof verified, $h_i'$ is the VRF hash of $l \parallel \nonce_i$, and therefore $h_i' = h_i$. By the unique resolution property of the tree, $\Call{CheckConsistency}$ therefore derived the same encoded claim $c_i = \bar{c_i} \parallel \claimcommit_i$ as the challenger did by calling $\Call{GetClaim}$. Moreover, the proof $\pi_{\textsf{ref}}^{(i)}$ proves that $\claimcommit_i$ commits to $x'$ such that $x' \in \{H_q(C_i)\}_1^t$.

This contradicts the binding property of the commitment scheme or the soundness of the zero-knowledge proofs.
\end{proof}

\end{document}